\newtheorem{theorem}{Theorem}
\newtheorem{lemma}[theorem]{Lemma}
\newtheorem{corollary}[theorem]{Corollary}
\newtheorem{example}[theorem]{Example}
\newcommand{\pref}{{\succ}} 
\newcommand{\capacity}{q}
\newcommand{\quota}{\capacity}
\newcommand{\applicantset}{A}
\newcommand{\courseset}{C}
\newcommand{\applicant}{a}
\newcommand{\course}{c}
\newcommand{\inputsetting}{I}
\newcommand{\weaklypref}{\succeq}
\newcommand{\strictlypref}{\succ}
\newcommand{\tie}{\simeq}
\newcommand{\matching}{\mu}
\newcommand{\copiesorder}{\Sigma}
\newcommand{\eqclass}{C}
\newcommand{\mechanism}{\phi}
\newcommand{\numapplicants}{n_1}
\newcommand{\numcourses}{n_2}
\newcommand{\preflist}{P}
\newcommand{\preflistset}{\mathcal{P}}
\newcommand{\path}{\mathfrak{P}}
\newcommand{\cpath}{\mathfrak{C}}
\newcommand{\visited}{\Upsilon}
\newcounter{nodemarkers}
\newcommand*\circled[1]{\tikz[baseline=(char.base)]{
    \node[shape=circle,draw,inner sep=2pt] (char) {#1};}}
\newcommand{\hide}[1]{}
\def\mt{t\kern-0.035cm\char39\kern-0.03cm}
\def\ml{l\kern-0.035cm\char39\kern-0.03cm}
\def\md{d\kern-0.035cm\char39\kern-0.03cm}
\newtheorem{df}{Definition}
\begin{document}

\title{\bf Pareto Optimal Matchings in Many-to-Many Markets with Ties \thanks{
%
%
This research has been co-financed by the European Union (European Social Fund - ESF) and Greek national funds under Thales grant MIS
380232 (Eirinakis, Magos, Mourtos),
by grant EP/K01000X/1 from the Engineering and Physical Sciences Research Council (Manlove, Rastegari),
grants VEGA 1/0344/14, 1/0142/15 from the Slovak Scientific grant agency VEGA (Cechl\'arov\'a), student grant VVGS-PF-2014-463 (Oce\v l\'akov\'a) and OTKA grant K108383 (Fleiner). 
The authors gratefully acknowledge the support of COST Action IC1205 Computational Social Choice. A preliminary version of this paper has been published in the proceedings of the 8th International Symposium on Algorithmic Game Theory (SAGT 2015) \cite{CEFMMMOR-sagt15}. 
}
}
\author{Katar\'{i}na Cechl\'{a}rov\'{a}$^1$, Pavlos Eirinakis$^2$, Tam\'{a}s Fleiner$^3$, Dimitrios Magos$^4$, \\ David Manlove$^5$, Ioannis Mourtos$^2$, Eva Oce\v l\'akov\'a$^1$ and Baharak Rastegari$^5$ \\
\\
\small
\emph{$^1$ Institute of Mathematics, Faculty of Science, P.J. \u{S}af\'{a}rik University} \\ 
\small
\emph{Ko\u{s}ice, Slovakia}
\\
\small
\emph{$^2$ Department of Management Science and Technology,} \\ 
\small
\emph{Athens University of Economics and Business, Athens, Greece}
\\
\small
\emph{$^3$ Budapest University of Technology and Economics, Magyar tud\'{o}sok k\"{o}r\'{u}tja and} \\ 
\small
\emph{MTA-ELTE Egerv\'ary Research Group, Budapest, Hungary}
\\
\small
\emph{$^4$ Department of Informatics, Technological Educational Institute of Athens} \\ 
\small{Egaleo, Greece}
\\
\small
\emph{$^5$ School of Computing Science, University of Glasgow, Glasgow, UK}
}
\date{}
\maketitle              

\begin{abstract}
We consider Pareto optimal matchings (POMs) in a many-to-many market of applicants and courses where applicants have preferences, which may include ties, over individual courses and lexicographic preferences over sets of courses. Since this is the most general setting examined so far in the literature, our work unifies and generalizes several known results. Specifically, we characterize POMs and introduce the \emph{Generalized Serial Dictatorship Mechanism with Ties (GSDT)} that effectively handles ties via properties of network flows. We show that GSDT can generate all POMs using different priority orderings over the applicants, but it satisfies truthfulness only for certain such orderings. This shortcoming is not specific to our mechanism; we show that any mechanism generating all POMs in our setting is prone to strategic manipulation. 
This is in contrast to the one-to-one case (with or without ties), for which truthful mechanisms generating all POMs do exist. 
\end{abstract}

\noindent
{\bf Keywords:} Pareto optimality, many-to-many matching, serial dictatorship, truthfulness

\section{Introduction}
We study a many-to-many matching market that involves two finite disjoint sets, a set of applicants and a set of courses. Each applicant finds a subset of courses acceptable and has a preference ordering, not necessarily strict, over these courses. Courses do not have preferences. Moreover, each applicant has a quota on the number of courses she can attend, while each course has a quota on the number of applicants it can admit.

A matching is a set of applicant-course pairs such that each applicant is paired only with acceptable courses and the quotas associated with the applicants and the courses are respected. The problem of finding an ``optimal'' matching given the above market is called the \emph{Course Allocation problem} (CA). Although various optimality criteria exist, \emph{Pareto optimality} (or \emph{Pareto efficiency}) remains the most popular one (see, e.g., \cite{AS98,ACMM04,CG10a,CEFMMP14,SS13WINE}). Pareto optimality is a fundamental concept that economists regard as a minimal requirement for a ``reasonable'' outcome of a mechanism. A matching is a Pareto optimal matching (POM) if there is no other matching in which no applicant is worse off and at least one applicant is better off. Our work examines Pareto optimal many-to-many matchings in the setting where applicants' preferences may include ties.

In the special case where each applicant and course has quota equal to one, our setting reduces to the extensively studied \emph{House Allocation problem} (HA) \cite{HZ79,AS98,Zho90},
also known as the \emph{Assignment problem} \cite{Gar73,Bogomolnaia-Moulin}. 
Computational aspects of HA have been examined thoroughly \cite{ACMM04,Man13} and particularly for the case where applicants' preferences are strict. In \cite{ACMM04} the authors provide a characterization of POMs in the case of strict preferences and utilize it in order to construct polynomial-time algorithms for checking whether a given matching is a POM and for finding a POM of maximum size. They also show that any POM in an instance of HA with strict preferences can be obtained through the well-known \emph{Serial Dictatorship Mechanism (SDM)} \cite{AS98}.
SDM is a straightforward greedy algorithm that allocates houses sequentially according to some exogenous priority ordering of the applicants.

Recently, the above results have been extended in two different directions. The first one \cite{KMRZ14} considers HA in settings where preferences may include ties. Prior to \cite{KMRZ14}, few works in the literature had considered extensions of SDM to such settings. The difficulty regarding ties, observed already in \cite{Svensson}, is that the assignments made in the individual steps of the SDM are not unique, and an unsuitable choice may result in an assignment that violates Pareto optimality. In \cite{BM-2004} and \cite{Svensson} an implicit extension of SDM is provided (in the former case for \emph{dichotomous preferences}, where an applicant's preference list comprises a single tie containing all acceptable houses), but without an  explicit  description of  an algorithmic procedure. 
Krysta et al. \cite{KMRZ14} describe a mechanism called the \emph{Serial Dictatorship Mechanism with Ties (SDMT)} that combines SDM with the use of augmenting paths to ensure Pareto optimality. SDMT includes an augmentation step, in which applicants already assigned a house may exchange it for another, equally preferred one, to enable another applicant to take a house that is most preferred given the assignments made so far. They also show that any POM in an instance of HA with ties can be obtained by an execution of SDMT and also provide the so-called \emph{Random Serial Dictatorship Mechanism with Ties (RSDMT)} whose (expected) approximation ratio is $\frac{e}{e-1}$ with respect to the maximum-size POM. 

The second direction \cite{CEFMMP14} extends the results of \cite{ACMM04} to the many-to-many setting (i.e., CA) with strict preferences, while also allowing for a structure of applicant-wise acceptable sets that is more general than the one implied by quotas. Namely, \cite{CEFMMP14} assumes that each applicant selects from a family of course subsets that is downward closed with respect to inclusion. This work provides a characterization of POMs assuming that the preferences of applicants over sets of courses are obtained from their (strict) preferences over individual courses in a lexicographic manner. Using this characterization, it is shown that deciding whether a given matching is a POM can be accomplished in polynomial time. In addition, \cite{CEFMMP14} generalizes SDM to provide the \emph{Generalized Serial Dictatorship Mechanism (GSD)}, which can be used to obtain any POM for CA under strict preferences. The main idea of GSD is to allow each applicant to choose not her most preferred set of courses at once but, instead, only one course at a time (i.e., the most preferred among non-full courses that can be added to the courses already chosen).  This result is important as the version of SDM where an applicant chooses immediately her most preferred set of courses cannot obtain all POMs.

\vspace{-3pt}
\paragraph{\textbf{Our contribution.}}

In the current work, we combine the directions appearing in \cite{KMRZ14} and \cite{CEFMMP14} to 
explore the many-to-many setting in which applicants have preferences, which may include ties, over individual courses. We extend these preferences to sets of courses lexicographically, 
since lexicographic set preferences naturally describe human behavior \cite{GG96}, they have already been considered in models of exchange of indivisible goods \cite{CEFMMP14,Lesca} and also possess theoretically interesting properties including responsiveness \cite{KM01}.
  
We provide a characterization of POMs in this setting, leading to a polynomial-time algorithm for testing whether a given matching is Pareto optimal. We introduce the \emph{Generalized Serial Dictatorship Mechanism with Ties (GSDT)} that generalizes both SDMT and GSD. SDM assumes a priority ordering over the applicants, according to which applicants are served one by one by the mechanism. Since in our setting applicants can be assigned more than one course, each applicant can return to the ordering several times (up to her quota), each time choosing just one course. The idea of using augmenting paths \cite{KMRZ14} has to be employed carefully to ensure that during course shuffling no applicant replaces a previously assigned course for a less preferred one. To achieve this, we utilize methods and properties of network flows. Although we prove that GSDT can generate all POMs using different priority orderings over applicants, we also observe that some of the priority orderings guarantee truthfulness whereas some others do not. That is, there may exist priority orderings for which some applicant  benefits from misrepresenting her preferences. This is in contrast to SDM and SDMT in the one-to-one case in the sense that all executions of these mechanisms induce truthfulness. This shortcoming however is not specific to our mechanism, since we establish that any mechanism generating all POMs is prone to strategic manipulation by one or more applicants. 

\vspace{-3pt}

\paragraph{\textbf{Organization of the paper.}}
In Section~\ref{setting} we define our notation and terminology. The characterization is provided in Section~\ref{characterization}, while GSDT is presented in Section~\ref{GSDT}. A discussion on applicants' incentives in GSDT is provided in Section~\ref{truthfulness}.
Avenues for future research are discussed in Section~\ref{futWor}.
%
\setlength{\textfloatsep}{0pt}
\begin{table}[ht!]
\centering
\begin{tabular}{|c c l| c    c|  }
\hline
applicant &  quota & preference list &  course   & quota    \\
\hline
$\applicant_1$ & 2&   $(\{c_1,c_2\},\{c_3\},\emptyset)$      & $\course_1$    & 2  \\[1mm]
$\applicant_2$ & 3 &  $(\{c_2\}, \{c_1,c_3\}, \emptyset)$       & $\course_2$    & 1 \\[1mm]
$\applicant_3$ & 2 &  $(\{c_3\},\{c_2\},\{c_1\})$       & $\course_3$    & 1   \\  [1mm]
\hline
\end{tabular}
\caption{An instance $I$ of {\sc ca}.}\label{tab1}
\end{table}

\section{Preliminary definitions of notation and terminology}\label{setting}
Let $\applicantset=\{\applicant_1,\applicant_2,\cdots,\applicant_{\numapplicants}\}$ be the set of applicants, $\courseset=\{\course_1,\course_2, \cdots, \course_{\numcourses}\}$ the set of courses and $[i]$ denote the set $\{1,2,\ldots,i\}$. Each applicant $\applicant$ has a quota $b(\applicant)$ that denotes the maximum number of courses $\applicant$ can accommodate into her schedule, and likewise each course $\course$ has a quota $\quota(\course)$ that denotes the maximum number of applicants it can admit.
Each applicant finds a subset of courses acceptable and has a transitive and complete preference ordering, not necessarily strict, over these courses.
We write $\course\strictlypref_{\applicant} \course'$ to denote that applicant $\applicant$ \emph{(strictly) prefers} course $\course$ to course $\course'$, and  $\course \tie_{\applicant} \course'$ to denote that $\applicant$ is \emph{indifferent between} $\course$ and $\course'$. We write $\course \weaklypref_{\applicant} \course'$ to denote that $\applicant$ either prefers $\course$ to $\course'$ or is indifferent between them, and say that $\applicant$ \emph{weakly prefers} $\course$ to $\course'$.

Because of indifference, each applicant divides her acceptable courses into \emph{indifference classes} such that she is indifferent between the courses in the same class and has a strict preference over courses in different classes. Let $\eqclass^{\applicant}_t$ denote the $t$'th indifference class, or \emph{tie}, of applicant $\applicant$ where $t \in [\numcourses]$. 
The preference list of any applicant $\applicant$ is the tuple of sets $\eqclass^{\applicant}_t$, i.e., $\preflist(\applicant) = (\eqclass^{\applicant}_1, \eqclass^{\applicant}_2 , \cdots ,\eqclass^{\applicant}_{\numcourses})$; 
we assume that $\eqclass^{\applicant}_t = \emptyset$ implies $\eqclass^{\applicant}_{t'} = \emptyset$ for all $ t' >t$.
Occasionally we consider $\preflist(\applicant)$ to be a set itself and write $c \in \preflist(\applicant)$ instead of $c \in \eqclass^{\applicant}_t$ for some $t$. We denote by $\preflistset$ the joint preference profile of all applicants, and by $\preflistset(-\applicant)$ the joint profile of all applicants except $\applicant$. 
Under these definitions, an instance of CA is denoted by $\inputsetting=(\applicantset, \courseset, \preflistset, b, q)$. Such an instance appears in Table \ref{tab1}. 


A \emph{(many-to-many) assignment} $\matching$ is a subset of $\applicantset\times \courseset$. For $\applicant \in \applicantset$, $\matching(\applicant)=\{\course\in\courseset: (\applicant,\course)\in \matching\}$ and for $\course \in \courseset$, $\matching(\course)=\{\applicant\in\applicantset: (\applicant,\course)\in\matching\}$. An assignment $\matching$ is a \emph{matching} if $\matching(\applicant) \subseteq \preflist(\applicant)$---and thus $\matching$ is individually rational, $|\matching(\applicant)|\leq b(\applicant)$ for each $\applicant\in\applicantset$ and $|\matching(\course)|\leq q(\course)$ for each $\course\in\courseset$. We say that $\applicant$ is \emph{exposed} if $|\matching(\applicant)| < b(\applicant)$, and is \emph{full} otherwise. 
Analogous definitions of exposed and full hold for courses.

For an applicant $\applicant$ and a set of courses $S$, we define the \emph{generalized characteristic vector} $\chi_{\applicant}(S)$ as the vector $(|S\cap \eqclass_1^{\applicant}|, |S\cap \eqclass_2^{\applicant}|,\ldots,|S\cap \eqclass_{\numcourses}^{\applicant}| )$. We assume that for any two sets of courses $S$ and $U$, $\applicant$ prefers $S$ to $U$ if and only if $\chi_\applicant(S) >_{lex} \chi_\applicant(U)$, i.e., if and only if there is an indifference class $\eqclass_t^{\applicant}$ such that $|S\cap \eqclass_t^{\applicant}|>|U\cap \eqclass_t^{\applicant}|$ and $|S\cap \eqclass_{t'}^{\applicant}|= |U\cap \eqclass_{t'}^{\applicant}|$ for all $t' < t$. 
If $\applicant$ neither prefers $S$ to $U$ nor $U$ to $S$, then she is indifferent between $S$ and $U$. 
We write $S \pref_\applicant U$ if $\applicant$ prefers $S$ to $U$, $S \tie_\applicant U$ if $\applicant$ is indifferent between $S$ and $U$, and $S \weaklypref_{\applicant} U$ if $\applicant$ weakly prefers $S$ to $U$. 


A matching $\matching$ is a \emph{Pareto optimal matching} (POM) if there is no other matching in which some applicant is better off and none is worse off. Formally, $\matching$ is Pareto optimal if there is no matching $\matching'$ such that $\matching'(\applicant) \weaklypref_{\applicant} \matching(\applicant)$ for all $\applicant \in \applicantset$, and $\matching'(\applicant') \pref_{\applicant'} \matching(\applicant')$ for some $\applicant' \in \applicantset$. If such a $\matching'$ exists, we say that $\matching'$ \emph{Pareto dominates} $\matching$.

A \emph{deterministic mechanism} $\mechanism$ maps an instance to a matching, i.e. $\mechanism : \inputsetting  \mapsto \matching$ where $\inputsetting$ is a CA instance and $\matching$ is a matching in $\inputsetting$. 
A \emph{randomized mechanism} $\mechanism$ maps an instance to a distribution over possible matchings.
%
Applicants' preferences are private knowledge and an applicant may prefer not to reveal her preferences truthfully. A deterministic mechanism is 
\emph{dominant strategy truthful} (or just \emph{truthful}) 
if all applicants always find it best to declare their true preferences, no matter what other applicants declare. Formally speaking, for every applicant $\applicant$ and every possible declared preference list $\preflist'(\applicant)$, $\mechanism(\preflist(\applicant),\preflistset(-\applicant))\weaklypref_{\applicant}\mechanism(\preflist'(\applicant),\preflistset(-\applicant))$, for all $\preflist(\applicant), \preflistset(-\applicant)$.
A randomized mechanism $\mechanism$ is \emph{universally truthful}
if it is a probability distribution over deterministic truthful mechanisms.
\section{Characterizing Pareto optimal matchings}\label{characterization}


Manlove \cite[Sec.\ 6.2.2.1]{Man13} provided a characterization of Pareto optimal matchings in HA with preferences that may include indifference. He defined three different types of \emph{coalitions} with respect to a given matching such that the existence of either means that a subset of applicants can trade among themselves (possibly using some exposed course) ensuring that, at the end, no one is worse off and at least one applicant is better off. 
He also showed that if no such coalition exists, then the matching is guaranteed to be Pareto optimal. We show that this characterization extends to the many-to-many setting, although the proof is more complex and involved than in the one-to-one setting. We then utilize the characterization in designing a polynomial-time algorithm for testing whether a given matching is Pareto optimal.

In what follows we assume that in each sequence $\cpath$ no applicant or course appears more than once.

An \emph{alternating path coalition} w.r.t. $\matching$ comprises a sequence 
$\cpath=\langle\course_{j_0}, \applicant_{i_0}, \course_{j_1}, \linebreak \applicant_{i_1},\ldots,\course_{j_{r-1}},\applicant_{i_{r-1}},\course_{j_r}\rangle$ where $r\geq 1$, $\course_{j_k} \in \matching(\applicant_{i_k})$ ($0\leq k\leq r-1$), $\course_{j_k} \not\in\matching(\applicant_{i_{k-1}})$ ($1\leq k \leq r$), $\applicant_{i_0}$ is full, and $\course_{j_r}$ is an exposed course. Furthermore, $\applicant_{i_0}$ prefers $\course_{j_1}$ to $\course_{j_0}$ and, if $r \geq 2$, $\applicant_{i_k}$ weakly prefers $\course_{j_{k+1}}$ to $\course_{j_k}$ ($1\leq k \leq r-1$).

An \emph{augmenting path coalition} w.r.t. $\matching$ comprises a sequence
$\cpath=\langle\applicant_{i_0}, \course_{j_1},\applicant_{i_1}, \linebreak \ldots,\course_{j_{r-1}},\applicant_{i_{r-1}},\course_{j_r}\rangle$ where $r\geq 1$, $\course_{j_k} \in\matching(\applicant_{i_k})$ ($1\leq k\leq r-1$), $\course_{j_k} \not\in\matching(\applicant_{i_{k-1}})$ ($1\leq k \leq r$), $\applicant_{i_0}$ is an exposed applicant, and $\course_{j_r}$ is an exposed course. Furthermore, $\applicant_{i_0}$ finds $\course_{j_1}$ acceptable and, if $r\geq 2$, $\applicant_{i_k}$ weakly prefers $\course_{j_{k+1}}$ to $\course_{j_k}$ ($1\leq k \leq r-1$).


A \emph{cyclic coalition} w.r.t. $\matching$ comprises a sequence
$\cpath=\langle\course_{j_0}, \applicant_{i_0}, \course_{j_1}, \applicant_{i_1},\ldots,\course_{j_{r-1}},\linebreak \applicant_{i_{r-1}}\rangle$ where $r\geq 2$, $\course_{j_k} \in\matching(\applicant_{i_k})$ ($0\leq k\leq r-1$), and $\course_{j_k} \not\in\matching(\applicant_{i_{k-1}})$ ($1\leq k \leq r$). Furthermore, $\applicant_{i_0}$ prefers $\course_{j_1}$ to $\course_{j_0}$ and $\applicant_{i_k}$ weakly prefers $\course_{j_{k+1}}$ to $\course_{j_k}$ ($1\leq k \leq r-1$). (All subscripts are taken modulo $r$ when reasoning about cyclic coalitions).

We define an \emph{improving coalition} to be an alternating path coalition, an augmenting path coalition or a cyclic coalition. 

Given an improving coalition $\cpath$, the matching
\begin{equation} 
\matching^{\cpath} = (\matching \setminus \{(\applicant_{i_k}, \course_{j_k}): \delta\leq k \leq r-1\}) \cup \{(\applicant_{i_k}, \course_{j_{k+1}}): 0\leq k \leq r-1\} \}
\label{eq:matching'}
\end{equation}
is defined to be the matching obtained from $\matching$ by \emph{satisfying} $\cpath$ ($\delta =1$ in the case that $\cpath$ is an augmenting path coalition, otherwise $\delta=0$). 

We will soon show that improving coalitions are at the heart of characterizing Pareto optimal matchings (Theorem~\ref{thm:characterization}). The next lemma will come in handy in the proof of Theorem~\ref{thm:characterization} and Theorem~\ref{any-pom}. 
We say that a sequence of applicants and courses is a \emph{pseudocoalition} if it satisfies all conditions of an improving coalition except that some courses or applicants may appear more that once.


\begin{lemma}\label{lem:repetition}
Let a matching $\matching$ in an instance $\inputsetting$ of CA admit a pseudocoalition $K$ of length $\ell$ for some finite $\ell$. Then $\matching$ admits an improving coalition $\cpath$ of length at most $\ell$.
\end{lemma}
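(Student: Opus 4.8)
The plan is to argue that a \emph{shortest} pseudocoalition must already be repetition-free, and hence is itself an improving coalition. Concretely, I would isolate the following surgery claim: if a pseudocoalition $\cpath$ contains a repeated applicant or a repeated course, then $\matching$ admits a \emph{strictly shorter} pseudocoalition. Granting this, among all pseudocoalitions of length at most $\ell$ (one exists, namely $K$) I pick one of minimum length; by the claim it has no repeated vertex, so it satisfies the definition of an improving coalition verbatim, and its length is at most $\ell$. Throughout, the conditions I must preserve under surgery are: (i) the alternating pattern with the membership constraints $\course_{j_k}\in\matching(\applicant_{i_k})$ and $\course_{j_k}\notin\matching(\applicant_{i_{k-1}})$; (ii) the chain of weak preferences along the sequence; and (iii) the existence of one strict improvement (the ``seed''). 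Conditions (i) and (ii) are inherited automatically by a contiguous sub-sequence or by splicing at a shared course or applicant; the real work is keeping a seed, and this is where the surgery is chosen carefully.

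\emph{Repeated course.} Suppose $\course_{j_p}=\course_{j_{p'}}=\course$ with $p<p'$. For a path-type pseudocoalition with $p\ge 1$, I splice out the loop, replacing $\langle\ldots,\applicant_{i_{p-1}},\course,\applicant_{i_p},\ldots,\applicant_{i_{p'-1}},\course,\applicant_{i_{p'}},\ldots\rangle$ by $\langle\ldots,\applicant_{i_{p-1}},\course,\applicant_{i_{p'}},\ldots\rangle$: here $\applicant_{i_{p-1}}$ still wants $\course$ and $\applicant_{i_{p'}}$ still holds it and weakly prefers its successor, so all conditions hold, and since only applicants of index $\ge 1$ are deleted the seed $\applicant_{i_0}$ survives. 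The remaining possibilities ($p=0$ in an alternating path, or a cyclic pseudocoalition) are handled by noting that the two occurrences of $\course$ cut the sequence into two arcs, each of which closes into a smaller cyclic pseudocoalition; I keep the arc containing the contiguous triple $\langle\course_{j_0},\applicant_{i_0},\course_{j_1}\rangle$, which carries the strict seed.

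\emph{Repeated applicant (the crux).} Suppose $\applicant_{i_p}=\applicant_{i_{p'}}=\applicant$ with $p<p'$, so $\applicant$ holds both $\course_{j_p}$ and $\course_{j_{p'}}$ and wants $\course_{j_{p+1}}$ and $\course_{j_{p'+1}}$; note $p'\ge p+2$, since $p'=p+1$ would force $\course_{j_{p+1}}=\course_{j_{p'}}\in\matching(\applicant)$, contradicting $\course_{j_{p+1}}\notin\matching(\applicant)$. I branch on $\applicant$'s comparison of the first wanted course $\course_{j_{p+1}}$ with the second held course $\course_{j_{p'}}$. If $\course_{j_{p+1}}\strictlypref_{\applicant}\course_{j_{p'}}$, the middle segment closes into a cyclic pseudocoalition $\langle\course_{j_{p'}},\applicant,\course_{j_{p+1}},\applicant_{i_{p+1}},\ldots,\applicant_{i_{p'-1}}\rangle$ of length $\ge 2$ whose seed is exactly this strict preference, and whose wrap-around condition $\course_{j_{p'}}\notin\matching(\applicant_{i_{p'-1}})$ is inherited. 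Otherwise $\course_{j_{p'}}\weaklypref_{\applicant}\course_{j_{p+1}}$, and I perform a forward jump, deleting positions $p+1,\ldots,p'$ to leave $\applicant$ wanting $\course_{j_{p'+1}}$; transitivity along $\course_{j_{p'+1}}\weaklypref_{\applicant}\course_{j_{p'}}\weaklypref_{\applicant}\course_{j_{p+1}}\weaklypref_{\applicant}\course_{j_p}$ gives $\course_{j_{p'+1}}\weaklypref_{\applicant}\course_{j_p}$, and if $\applicant=\applicant_{i_0}$ the strict link $\course_{j_{p+1}}\strictlypref_{\applicant}\course_{j_p}$ upgrades this to a strict preference, preserving the seed. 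The augmenting-path case $\applicant=\applicant_{i_0}$, where $\applicant$ has no held course, is always settled by the forward jump, since $\course_{j_{p'+1}}$ is acceptable and satisfies $\course_{j_{p'+1}}\notin\matching(\applicant)$.

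The main obstacle is precisely this seed bookkeeping: a naive shortcut can delete the unique strictly improving step or degrade it to an indifference, which is why the dichotomy above (cycle extraction versus forward jump, selected by comparing $\applicant$'s wanted and held courses) is needed. Once the surgery claim is verified in all cases, the strict decrease in length guarantees termination, and the minimum-length pseudocoalition is the desired improving coalition of length at most $\ell$.
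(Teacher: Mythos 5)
Your proof is correct and takes essentially the same approach as the paper's: the paper argues by induction on the length of the pseudocoalition, applying exactly your surgeries --- loop-splicing at a repeated course, and for a repeated applicant the same dichotomy between extracting a cyclic pseudocoalition and performing a transitivity-based forward jump (including the same observation that the strict seed at $\applicant_{i_0}$ upgrades the spliced weak preference to a strict one). Your minimal-length extremal formulation is merely a repackaging of that induction, so the two proofs coincide in substance.
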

\begin{proof}
We prove this by induction. Obviously $\ell \geq 2$. For the base case, it is easy to see that $K$ itself is an augmenting path coalition (where $r=1$) if $\ell =2$ , an alternating path coalition (where $r=1$) if $\ell =3$, a cyclic coalition (where $r=2$) if $\ell = 4$ and $K$ ends with an applicant, and an augmenting path coalition (where $r=2$) if $\ell = 4$ and $K$ ends with a course. Assume that the claim holds for all pseudocoalitions of length $d$, $d<\ell$. We show that it also holds for any given pseudocoalition $K$ of length $\ell$. In the rest of the proof we show that either $K$ is in fact an improving coalition, or we can derive a shorter pseudocoalition $K'$ from $K$, hence completing the proof.

\vspace{5pt}
\textbf{The case for a repeated course}: Assume that a course $\course$ appears more than once in $K$. We consider two different scenarios.
\begin{enumerate}
	\item If $\course$ is not the very first element of $K$, then sequence $K$ is in the following form where where $\course=\course_{j_x}=\course_{j_y}$.
$$\dots, a_{i_{x-1}}, \overleftrightarrow{c_{j_x}, a_{i_x}, \dots}, c_{j_y}, \dots$$
In this case, we simply delete the portion of $K$ under the arrow. Note that in the new sequence $c_{j_y}$ appears right after after $a_{i_{x-1}}$, but then $c_{j_y}=\course_{j_x}$. Hence we have obtained a shorter sequence, and it is easy to verify that this new sequence is a pseudocoalition.
	\item If $\course$ is the first element of $K$, then $K$ is in the following form where $\course=\course_{j_0}=\course_{j_y}$.
$$\overleftrightarrow{c_{j_0}, a_{i_{0}}, \dots, a_{i_{y-1}}}, c_{j_y}, \dots$$
In this case, we simply only keep the portion of $K$ under the arrow, i.e. our new sequence starts with $c_{j_0}$ and ends with $a_{i_{y-1}}$. The new sequence is shorter and it is easy to verify that it is a pseudocoalition.
\end{enumerate}
 
\textbf{The case for a repeated applicant}: Assume that an applicant $\applicant$ appears more than once in $K$. We consider the three different possible scenarios.
\begin{enumerate}
	\item If $\applicant$ is not the very first element of $K$, nor the last one, then $K$ is in the following form where $\applicant=\applicant_{i_x}=\applicant_{i_y}$.
$$\dots, \course_{j_x}, \overleftrightarrow{a_{i_{x}}, c_{j_{x+1}}, \ldots, a_{i_{y-1}}}, c_{i_{y}}, a_{i_y}, c_{j_{y+1}}, \dots$$
Note that $c_{j_{x+1}} \weaklypref_{a_{i_x}} c_{j_x}$ and $c_{j_{y+1}} \weaklypref_{a_{i_x}} c_{j_y}$ (since $\applicant_{i_x}=\applicant_{j_y}$). We consider two different cases.
		\begin{enumerate}
			\item If $c_{j_{x+1}} \pref_{a_{i_x}} c_{j_y}$, then we simply only keep the portion of $K$ under the arrow and add $c_{j_y}$ to the beginning. That is, the new sequence is $K'=\langle c_{j_y}, a_{i_x}, c_{j_{x+1}}, \linebreak\ldots, a_{i_{y-1}} \rangle$. $K'$ is shorter than $K$ and it is easy to verify that that it is a pseudocoalition.
			\item If $c_{j_y} \weaklypref_{a_{i_x}} c_{j_{x+1}}$, then $c_{j_{y+1}} \weaklypref_{a_{i_x}} c_{j_{x}}$. Then we simply remove the portion of $K$ from $c_{j_{x+1}}$ up to $a_{i_{y}}$.  That is, the new sequence is $K' =\langle \ldots c_{j_x}, a_{i_x}, c_{j_{y+1}}\ldots\rangle$ which is shorter than $K$. Note that if either $c_{j_{x+1}} \pref_{a_{i_x}} c_{j_x}$ or $c_{j_{y+1}} \pref_{a_{i_x}} c_{j_y}$, then $c_{j_{y+1}} \pref_{a_{i_x}} c_{j_{x}}$. Hence it is easy to verify that $K'$ is a pseudocoalition.
		\end{enumerate}
	\item If $\applicant$ is the first element of $K$ but not the last one, then $K$ is in the following form where $\applicant=\applicant_{i_0}=\applicant_{i_y}$. 
	$$a_{i_0}, \overleftrightarrow{c_{j_1}, \ldots, c_{j_{y}}, a_{i_y}}, c_{j_{y+1}}, \dots$$
	Notice that $\applicant_0$ is exposed and finds $c_{j_{y+1}}$ acceptable and is not matched to it under $\matching$. Hence we simply remove the portion of $K$ under the arrow and get $K' =\langle a_{i_0}, c_{j_{y+1}}, \ldots\rangle$. $K'$ is shorter than $K$ and it is easy to verify that it is a pseudocoalition.
	\item If $\applicant$ is the last element of $K$ but not the first one, then $K$ is of the following form where $\applicant=\applicant_{i_x}=\applicant_{i_y}$.
	$$\course_{j_0}, \ldots, \course_{j_x}, \overleftrightarrow{a_{i_{x}}, c_{j_{x+1}}, \ldots, a_{i_{y-1}}}, c_{j_{y}}, a_{i_y}$$
	Note that, as $\applicant_{i_x}=\applicant_{i_y}$, $c_{j_{x+1}} \weaklypref_{a_{i_x}} c_{j_x}$ and $c_{j_{0}} \weaklypref_{a_{i_x}} c_{j_y}$.
Furthermore, 
	$a_{i_x}$ is not matched to $c_{j_0}$ in $\matching$.
	We consider two different cases.
		\begin{enumerate}
			\item If $c_{j_{x+1}} \pref_{a_{i_x}} c_{j_y}$, then we do as we did in Case 1(a). That is, we only keep the portion of $K$ under the arrow and add $c_{j_y}$ to the beginning; the new sequence is $K'=\langle c_{j_y}, a_{i_x}, c_{j_{x+1}}, \ldots, a_{i_{y-1}} \rangle$. $K'$ is shorter than $K$ and it is easy to verify that it is a pseudocoalition.
			\item If $c_{j_y} \weaklypref_{a_{i_x}} c_{j_{x+1}}$, then $c_{j_{0}} \weaklypref_{a_{i_x}} c_{j_{x+1}} \weaklypref_{a_{i_x}} c_{j_x}$. Then we simply remove the portion of $K$ from $c_{j_{x+1}}$ until the end of the sequence. That is, the new sequence is $K' =\langle c_{j_0}, \ldots ,c_{j_x}, a_{i_x}\rangle$. $K'$ is shorter than $K$ and it is easy to verify that it is a pseudocoalition.
		\end{enumerate}
\end{enumerate}
Note that we do not need to check the case where $K$ starts and ends with an applicant, since then $K$ would not fit the definition of an improving coalition.
\end{proof}

The following theorem gives a necessary and sufficient condition for a matching to be Pareto optimal. 
\begin{theorem}\label{thm:characterization}
Given a CA instance $\inputsetting$, a matching $\matching$ is a Pareto optimal matching in $\inputsetting$ if and only if $\matching$ admits no improving coalition.
\end{theorem}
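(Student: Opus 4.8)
The plan is to prove the two implications separately, handling the easy direction by a direct verification and spending the real effort on the converse.

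For the direction ``if $\matching$ admits an improving coalition then $\matching$ is not Pareto optimal'', I would simply check that the matching $\matching^{\cpath}$ of \eqref{eq:matching'} obtained by satisfying $\cpath$ Pareto dominates $\matching$, doing this uniformly for the three coalition types. There are three things to confirm. First, feasibility: along the sequence each $\applicant_{i_k}$ trades its held course $\course_{j_k}$ for the next course $\course_{j_{k+1}}$ (in the augmenting case the exposed $\applicant_{i_0}$ only acquires a course, and in the cyclic case indices wrap modulo $r$), so every applicant's load is preserved except the exposed $\applicant_{i_0}$, whose load rises by one and stays within $b(\applicant_{i_0})$; symmetrically each course keeps its load except the terminal exposed course $\course_{j_r}$, which has a free slot. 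Second, individual rationality: each newly assigned course is weakly preferred to an already acceptable held course (or explicitly acceptable for the exposed applicant), hence acceptable. Third, domination: replacing $\course_{j_k}$ by a weakly preferred $\course_{j_{k+1}}$ moves exactly one unit of the generalized characteristic vector $\chi_{\applicant_{i_k}}$ to a weakly earlier indifference class, so $\matching^{\cpath}(\applicant_{i_k})\weaklypref_{\applicant_{i_k}}\matching(\applicant_{i_k})$, while $\applicant_{i_0}$ strictly improves (a strictly preferred swap, or an extra acceptable course). Thus no applicant is worse off and $\applicant_{i_0}$ is strictly better off.

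For the substantive converse, suppose $\matching$ is not Pareto optimal and fix a dominating matching $\matching'$; to keep the difference structure clean I would choose $\matching'$ to minimize $|\matching\triangle\matching'|$ among all matchings that dominate $\matching$. Orient the symmetric difference as a directed bipartite graph $D$ with a gain-arc $\applicant\to\course$ for each $(\applicant,\course)\in\matching'\setminus\matching$ and a loss-arc $\course\to\applicant$ for each $(\applicant,\course)\in\matching\setminus\matching'$. Two local facts drive the construction. (i) If a course $\course$ is full in $\matching$ and gains an applicant in $\matching'$, then since $|\matching'(\course)|\le \quota(\course)=|\matching(\course)|$ it must also drop one, so it has an outgoing loss-arc. (ii) If an applicant $\applicant$ drops a course $\course\in\eqclass^{\applicant}_s$, then because $\matching'(\applicant)\weaklypref_{\applicant}\matching(\applicant)$ a short lexicographic argument (if every gain of $\applicant$ lay in a class worse than $s$, the first class where the two vectors differ would be a decrease) forces a gain in some class $t\le s$, i.e. a course weakly preferred to $\course$.

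Now pick an applicant $\applicant^{*}$ who strictly improves and let $\eqclass^{\applicant^*}_t$ be the first class in which $\matching'$ gives $\applicant^{*}$ strictly more courses; choose a gained course $\course_{j_1}$ there. If $\applicant^{*}$ is exposed I open an augmenting-type sequence $\applicant^{*},\course_{j_1}$; if $\applicant^{*}$ is full, the equality $|\matching(\applicant^*)|=b(\applicant^*)\ge|\matching'(\applicant^*)|$ together with the strict gain at class $t$ yields a dropped course $\course_{j_0}$ in a strictly worse class, opening an alternating-type sequence $\course_{j_0},\applicant^{*},\course_{j_1}$ with $\course_{j_1}\strictlypref_{\applicant^*}\course_{j_0}$. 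From $\course_{j_1}$ I then follow a preference-respecting walk in $D$: at a course, stop if it is exposed in $\matching$ (supplying the required exposed terminal), otherwise continue along a loss-arc by (i); at an applicant, continue along a gain-arc to a weakly better course by (ii). This produces a finite sequence satisfying all membership and preference conditions of an improving coalition except that vertices may repeat, i.e. a pseudocoalition; Lemma~\ref{lem:repetition} then returns a genuine improving coalition. The role of minimality is to guarantee the strict step survives: a terminating walk gives an augmenting or alternating pseudocoalition already carrying $\applicant^{*}$'s strict step, whereas a repeated vertex exposes an alternating directed cycle on which every applicant weakly prefers its gain to its loss, and minimality forbids such a cycle from being entirely indifferent (undoing it would give a strictly closer dominating matching), so after rotation it is a cyclic coalition with a strict first step.

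The main obstacle is precisely what the one-to-one theory sidesteps. With quotas the symmetric difference $\matching\triangle\matching'$ no longer decomposes into vertex-disjoint alternating paths and cycles — a vertex may have degree up to its quota — so a coalition cannot simply be read off a connected component, and a careless chase may revisit vertices or, far worse, absorb the single guaranteed strict improvement into an all-indifferent cycle that certifies nothing. Taming this is the crux, and the delicate verification is that under either mode of termination the extracted sequence meets the exact membership and preference requirements of one of the three coalition types; the combination of a minimal $\matching'$ (ruling out useless indifferent cycles) with Lemma~\ref{lem:repetition} (removing repetitions while preserving the strict first step) is what I expect to make the argument go through.
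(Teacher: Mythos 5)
Your forward direction and the overall skeleton of the converse (orient the symmetric difference, walk alternately along gain-arcs and loss-arcs starting from a strict improver, extract a pseudocoalition, finish with Lemma~\ref{lem:repetition}) are exactly the paper's approach, and your local facts (i)--(ii) are the paper's Facts 1--3. The genuine gap is in the cycle case. Your walk only guarantees, at each applicant, that the chosen gain-arc leads to a course weakly preferred to the course \emph{by which the walk arrived at that applicant}. When the walk later returns to an already-visited applicant $\applicant$, the cycle you extract uses as $\applicant$'s loss the course on the returning arc, which was never compared with the gain $\applicant$ chose; so your claim that ``every applicant on the cycle weakly prefers its gain to its loss'' fails precisely at the repeated vertex. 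Concretely, let $\applicant$ have indifference classes $\eqclass^{\applicant}_1=\{\course_1,\course_5\}$, $\eqclass^{\applicant}_2=\{\course_2\}$, $\eqclass^{\applicant}_3=\{\course_4\}$, with $\matching(\applicant)=\{\course_1,\course_4\}$ and $\matching'(\applicant)=\{\course_5,\course_2\}$ (so $\matching'$ is strictly better for $\applicant$, consistent with domination). Your walk may enter $\applicant$ along the loss-arc from $\course_4$ (class $3$) and exit along the gain-arc to $\course_2$ (class $2$), which is allowed since $\course_2\strictlypref_{\applicant}\course_4$, and later return to $\applicant$ along the loss-arc from $\course_1$ (class $1$); this is realizable in a full instance by routing the walk through auxiliary applicants who are indifferent between their gains and losses. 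The extracted cycle then asks $\applicant$ to trade $\course_1$ for $\course_2$, which she strictly dis-prefers, and no rotation turns it into a cyclic coalition.

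Your minimality device does not close this gap: as you deploy it, it only excludes \emph{entirely indifferent} cycles, while the problematic cycle above is not indifferent. Even the stronger consequence of minimality (any alternating cycle whose undoing preserves domination is forbidden) is insufficient, because a cycle can contain both a ``backward'' strict arc at the repeated applicant and a ``forward'' strict arc at another applicant whose on-cycle loss is compensated by off-cycle swaps, and undoing such a cycle breaks domination at that third applicant, so minimality says nothing about it. This is exactly where the paper works harder than your sketch: its iterative procedure forces every applicant who receives a strict improvement to exit through her \emph{first} improvement class $\ell^*_{\applicant}$ (Fact 2), maintains the counting facts under deletion of traversed edges, and records such applicants in the set $\visited$; then, whenever the walk returns to a visited applicant (the paper's Case 3), the counting condition forces the returning course to lie in a class strictly worse than $\ell^*_{\applicant}$, so the extracted cycle really is a cyclic pseudocoalition with its strict step intact. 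To repair your proof you need this (or an equivalent) discipline---for instance, always exit an applicant through a gained course in the lexicographically first class where her residual gains outnumber her residual drops, and then argue as the paper does that any return arc carries a strictly worse course; once that is in place, the minimality assumption on $\matching'$ becomes unnecessary.
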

\begin{proof}
Let $\matching$ be a Pareto optimal matching in $\inputsetting$. Assume to the contrary that $\matching$ admits an improving coalition $\cpath$. It is fairly easy to see
that matching $\matching^{\cpath}$ obtained from $\matching$ according to equation~(\ref{eq:matching'}) Pareto dominates $\matching$, a contradiction.

Conversely, let $\matching$ be a matching in $\inputsetting$ that admits no improving coalition, and suppose to the contrary that it is not Pareto optimal. 
Therefore, there exists some matching $\matching' \neq \matching$ such that $\matching'$ Pareto dominates $\matching$. Let $G_{\matching,\matching'} = \matching \oplus \matching'$ be the graph representing the symmetric difference of $\matching$ and $\matching'$. $G_{\matching,\matching'}$ is hence a bipartite graph with applicants in one side and courses in the other; by abusing notation, we may use $\applicant$ or
$\course$ to refer to a node in $G_{\matching,\matching'}$ corresponding to applicant $\applicant\in\applicantset$ or course $\course\in\courseset$, respectively.
Note that each edge of $G_{\matching,\matching'}$ either belongs to $\matching$, referred to as a $\matching$-edge, or to $\matching'$, referred to as a $\matching'$-edge.

We first provide an intuitive sketch of the rest of the proof.  Our goal is to find an improving coalition, hence contradicting the assumption that $\matching$ does not admit one. We start with an applicant $\applicant_{i_0}$ who strictly prefers $\matching'$ to $\matching$. We choose a $\matching'$-edge $(\applicant_{i_0},\course_{j_1})$ such that $\course_{j_1}$ belongs to the first indifference class where $\matching'$ is better than $\matching$ for $\applicant_{i_0}$. Either this edge already represents an improving coalition---which is the case if $\course_{j_1}$ is exposed in $\matching$---or there exists a $\matching$-edge incident to $\course_{j_1}$. In this fashion, we continue by adding edges that alternatively belong to $\matching'$ (when we are at an applicant node) and $\matching$ (when we are at a course node), never passing the same edge twice (by simple book-keeping). 
Most importantly, we always make sure that when we are at an applicant node, we choose a $\matching'$-edge such that the applicant weakly prefers the course incident to the $\matching'$-edge to the course incident to the precedent $\matching$-edge. We argue that ultimately we either reach an exposed course (which implies that an augmenting or an alternating path coalition exists in $\matching$) or are able to identify a cyclic coalition in $\matching$.  

Let us first provide some definitions and facts.
Let $\applicantset'$ be the set of applicants who prefer $\matching'$ to $\matching$; i.e., $\applicantset' = \{\applicant | \matching'(\applicant) \pref_{\applicant} \matching(\applicant)\}$. 
Moreover, for each $\applicant\in\applicantset$, let $\matching'_{\setminus\matching}(\applicant) = \matching'(\applicant)\setminus\matching(\applicant)$ and $\matching_{\setminus\matching'}(\applicant) = \matching(\applicant)\setminus\matching'(\applicant)$. Likewise, for each course $\course\in\courseset$, let $\matching_{\setminus\matching'}(\course) = \matching(\course)\setminus\matching'(\course)$ and $\matching'_{\setminus\matching}(\course) = \matching'(\course)\setminus\matching(\course)$.
Note that these sets will be altered during the course of the proof.
In what follows and in order to simplify presentation, we will say 
that we remove a $\matching'$-edge $(\applicant,\course)$ from $G_{\matching,\matching'}$ to signify that we remove $\course$
from $\matching'_{\setminus\matching}(\applicant)$ and remove 
$\applicant$ from $\matching'_{\setminus\matching}(\course)$; similarly, we say that we remove a $\matching$-edge $(\applicant,\course)$ from $G_{\matching,\matching'}$
to signify that we remove $\course$
from $\matching_{\setminus\matching'}(\applicant)$ and remove 
$\applicant$ from $\matching_{\setminus\matching'}(\course)$.
The facts described below follow from the definitions and the assumption that $\matching'$ Pareto dominates $\matching$. Let $\visited$ be a set containing applicants (initially, $\visited =\emptyset$); we will explain later what this set will come to contain.

\vspace{5pt}
\noindent
\textbf{Fact 1:} For every course $\course$ that is full under $\matching$, it is the case that $|\matching_{\setminus\matching'}(\course)| \geq |\matching'_{\setminus\matching}(\course)|$.


\vspace{5pt}
\noindent
\textbf{Fact 2:} For every applicant $\applicant \in \applicantset' \setminus \visited$, there exists $\ell^*_{\applicant}\leq \numcourses$ such that $|\eqclass^{\applicant}_{\ell} \cap \matching'_{\setminus\matching}(\applicant)| = |\eqclass^{\applicant}_{\ell} \cap \matching_{\setminus\matching'}(\applicant)|$ for all $\ell<\ell^*_{\applicant}$, and $|\eqclass^{\applicant}_{\ell_{\applicant}^*} \cap \matching'_{\setminus\matching}(\applicant)| > |\eqclass^{\applicant}_{\ell_{\applicant}^*} \cap \matching_{\setminus\matching'}(\applicant)|$. Hence, $\eqclass^{\applicant}_{\ell_{\applicant}^*}$ denotes the first indifference class of $\applicant$ in which $\matching'$ is better than $\matching$ for $\applicant$. 

\vspace{5pt}
\noindent
\textbf{Fact 3:} For every applicant $\applicant \in \applicantset'$ that is full in $\matching$, there exists $\ell^+$, $\ell^*_{\applicant} < \ell^+\leq \numcourses$, such that 
$|\eqclass^{\applicant}_{\ell^+} \cap \matching_{\setminus\matching'}(\applicant)| > |\eqclass^{\applicant}_{\ell^+} \cap \matching'_{\setminus\matching}(\applicant)|$. 

\vspace{5pt}

In what follows, we iteratively remove edges from  $G_{\matching,\matching'}$ (so as not to visit an edge twice and maintain the aformentioned facts)
and create a corresponding path $\cpath$,
which we show that in all possible cases implies 
an improving coalition in $\matching$, 
thus contradicting our assumption.

As $\matching'$ Pareto dominates $\matching$, $\applicantset'$ is nonempty. Let $\applicant_{i_0}$ be an applicant in $\applicantset'$. By Fact~2, there exists a course $\course_{j_1} \in \eqclass_{\ell_{\applicant_{i_0}}^*} \cap \matching'_{\setminus\matching}(\applicant_{i_0})$.
Hence, we remove edge $(\applicant_{i_0}, \course_{j_1})$ from $G_{\matching,\matching'}$ and add it to $\cpath$, i.e.,
$\cpath = \langle\applicant_{i_0}, \course_{j_1}\rangle$.
%
%
If $\course_{j_1}$ is exposed, then an augmenting path coalition is implied. More specifically, if $\applicant_{i_0}$ is also exposed in $\matching$, then $\cpath = \langle\applicant_{i_0},\course_{j_1}\rangle$ is an augmenting path coalition. Otherwise, if $\applicant_{i_0}$ is full in $\matching$, then it follows from Fact 3 that there exists a course $\course_{j_0}\in \matching_{\setminus\matching'}(\applicant_{i_0})$ such that $\course_{j_1} \pref_{\applicant_{i_0}} \course_{j_0}$. Then $\cpath=\langle\course_{i_0}, \applicant_{i_0}, \course_{j_1}\rangle$ is an alternating path coalition.

If, on the other hand, $\course_{j_1}$ is full,
we continue our search for an improving coalition in an iterative manner (as follows) until we either reach an exposed course (in which case we show that an augmenting or alternating coalition is found) or revisit an applicant that belongs to $\applicantset'$ (in which case we show that a cyclic coalition is found). Note that $\applicant_{i_0}\in\matching'_{\setminus\matching}(\course_{j_1})$, hence it follows from Fact 1 that $|\matching_{\setminus\matching'}(\course_{j_1})| \geq 1$.
At the start of each iteration $k\geq 1$, we have: 
\[\cpath=\langle\applicant_{i_0}, \course_{j_1}, \ldots, \applicant_{i_{k-1}}, \course_{j_{k}}\rangle\]
such that by the construction of $\cpath$ (through the iterative procedure) every applicant $\applicant_{i_x}$ on $\cpath$ weakly prefers the course that follows her on $\cpath$--- i.e., $\course_{j_{x+1}}$ to which she is matched to in $\matching'$ but not in $\matching$---to the course that precedes her---i.e., $\course_{j_{x}}$ to which she is matched to in $\matching$ but not in $\matching'$. Moreover, $\applicant_{i_0}$ is either exposed and can accommodate one more course, namely $\course_{j_1}$, or is full and hence there exists a course $\course_{j_0}\in\matching_{\setminus\matching'}(\applicant_{i_0})$ such that $\applicant_{i_0}$ strictly prefers $\course_{j_1}$ to $\course_{j_0}$. Notice that $\cpath$ would imply an improving coalition (using Lemma~\ref{lem:repetition}) if $\course_{j_k}$ is exposed.

During the iterative procedure, set $\visited$ includes those applicants who strictly prefer the course they are given in $\cpath$ under $\matching'$ than under $\matching$. That is, $\visited=\{\applicant_{i_k}\in\cpath: \course_{j_{k+1}}\succ_{\applicant_{i_k}} \course_{j_k}\}\cup \{\applicant_{i_0}\}$.
Note that applicant $\applicant_{i_0}$ is also included in $\visited$, since either she is exposed, or $\exists \course_{j_0}\in\matching_{\setminus\matching'}(\applicant_{i_0})$ such that $\course_{j_1} \pref_{\applicant_{i_0}}\course_{j_0}$. 


\vspace{5pt}
\noindent
\textbf{The Iterative Procedure}: We repeat the following procedure until either the last course on $\cpath$, $\course_{j_k}$, is exposed or Case 3 is reached.
If $\course_{j_k}$ is full, then by Fact 1 there must exist an applicant $\applicant_{i_k}\in \matching_{\setminus\matching'}(\course_{i_k})$; note that $\applicant_{i_{k}} \neq \applicant_{i_{k-1}}$. Remove $(\applicant_{i_k},\course_{j_k})$ from $G_{\matching,\matching'}$.
Let $\eqclass^{\applicant_{i_k}}_{\ell}$ be the indifference class of $\applicant_{i_k}$ to which $\course_{j_k}$ belongs. We consider three different cases.
\begin{itemize}
	\item \textbf{Case 1: $|\eqclass^{\applicant_{i_k}}_{\ell} \cap \matching'_{\setminus\matching}(\applicant_{i_k})| \geq |\eqclass^{\applicant{i_k}}_{\ell} \cap \matching_{\setminus\matching'}(\applicant_{i_k})|$.} Then there must exist a course $\course_{j_{k+1}} \in \eqclass^{\applicant_{i_k}}_{\ell} \cap \matching'_{\setminus\matching}(\applicant_{i_k})$; note that $\course_{j_k} \neq \course_{j_{k+1}}$. 
Hence, we remove $(\applicant_{i_k},\course_{j_{k+1}})$ from $G_{\matching,\matching'}$ and append $\langle  \applicant_{i_k}, \course_{j_{k+1}}\rangle$  to $\cpath$.
	%
	 
	\item \textbf{Case 2: $|\eqclass^{\applicant_{i_k}}_{\ell} \cap \matching'_{\setminus\matching}(\applicant_{i_k})| < |\eqclass^{\applicant{i_k}}_{\ell} \cap \matching_{\setminus\matching'}(\applicant_{i_k})|$ and $\applicant_{i_k} \notin \visited$.} Recall that $\matching'$ Pareto dominates $\matching$, hence it must be that $\applicant_{i_k} \in \applicantset'$. Therefore if follows from Fact 2 that $\ell^*_{\applicant_{i_k}} < \ell$ and there exists a course $\course_{j_{k+1}} \in \eqclass^{\applicant_{i_k}}_{\ell^*_{\applicant_{i_k}}} \cap \matching'_{\setminus\matching}(\applicant_{i_k})$; note that $\course_{j_k} \neq \course_{j_{k+1}}$. 
Hence, we remove $(\applicant_{i_k},\course_{j_{k+1}})$ from $G_{\matching,\matching'}$ and append $\langle  \applicant_{i_k}, \course_{j_{k+1}}\rangle$  to $\cpath$.
Moreover, we add $\applicant_{i_k}$ to $\visited$. 

	\item \textbf{Case 3: $|\eqclass^{\applicant_{i_k}}_{\ell} \cap \matching'_{\setminus\matching}(\applicant_{i_k})| < |\eqclass^{\applicant{i_k}}_{\ell} \cap \matching_{\setminus\matching'}(\applicant_{i_k})|$ and $\applicant_{i_k} \in \visited$}; note that this case cannot happen when $k=1$. Let $\applicant_{i_z}$ be the first occurrence of $\applicant_{i_k}$ which has resulted in $\applicant_{i_k}$ being added to $\visited$; i.e., $0 \leq z<k$ and $\applicant_{i_z} = \applicant_{i_k}$. Therefore, $\cpath$ is of the form $\langle\applicant_{i_0}, \course_{j_1}, \applicant_{i_1},\ldots, \applicant_{i_z}, \course_{j_{z+1}}, \ldots \applicant_{i_{k-1}}, \course_{j_k}\rangle$. Let us consider $\cpath=\langle\course_{j_k}, \applicant_{i_z}, \course_{j_{z+1}}, \ldots \applicant_{i_{k-1}}\rangle$. 
Since $\applicant_{i_z} = \applicant_{i_k}\in\visited$,  $\applicant_{i_z} \in \applicantset'$ and, by the construction of $\cpath$, $\course_{j_{z+1}}\in \eqclass^{i_k}_{\ell^*_{i_k}}$, 
it follows from Fact 2 that $\ell^*_{a_{i_z}} < \ell$, which implies
(using the assumption of Case 3) 
that $\course_{j_{z+1}} \pref_{i_z} \course_{j_k}$.
	
\end{itemize}

\vspace{5pt}
\noindent

If at any iteration $k$ we find a $\course_{j_k}$  that is  exposed in $\matching$ or we arrive at Case 3, then 
an improving coalition is implied (using Lemma~\ref{lem:repetition}), which contradicts our assumption on
$\matching$. Otherwise (cases 1 and 2), we continue with a new iteration. 
However, since the number of edges is bounded, this iterative procedure is bound to terminate, either by reaching an exposed course or by reaching Case 3.
\end{proof}

It follows from Theorem \ref{thm:characterization} that in order to test whether a given matching $\matching$ is Pareto optimal, it is sufficient to check whether it does not admit any improving coalition. To check this, we construct the \emph{extended envy graph} associated with $\matching$.

\begin{df}
\label{EEG}The \emph{extended envy graph} $G(\matching)=(V_{G(\matching)},E_{G(\matching)})$
associated with a matching $\matching$ is a weighted digraph with $V_{G(\matching)}=V_\applicantset\cup V_\courseset\cup
V_{\matching}$, where $V_\courseset=\{\course: \course\in \courseset\}, V_\applicantset = \{\applicant: \applicant\in \applicantset\}$ and $V_{\matching} = \{\applicant\course:(\applicant,\course)\in \matching\}$, and $E_{G(\matching)}=E_{G(\matching)}^{1}\cup E_{G(\matching)}^{2}\cup
E_{G(\matching)}^{3}$ where:

\begin{itemize}
\item $E_{G(\matching)}^{1}=\{(\course,\applicant): \applicant\in\applicantset, \course\in\courseset,$ and $|\matching(\course)|<q(\course)\}\bigcup\{(\course,\applicant\course'): \course\in \courseset$ and $|\matching(\course)|<q(\course)\},$ with weight equal to $0$;

\item $E_{G(\matching)}^{2}=\{(\applicant,\course): \applicant\in\applicantset, \course\in P(\applicant)\setminus \matching(\applicant)$ and $|\matching(\applicant)|<b(\applicant)\}\bigcup\{(\applicant,\applicant'\course): \applicant\in \applicantset\setminus \{\applicant'\}, \course\in P(\applicant)\setminus \matching(\applicant)$ and $|\matching(\applicant)|<b(\applicant) \},$ with weight equal to $-1$; and

\item $E_{G(\matching)}^{3}=\{(\applicant\course,\course'): \course'\in P(\applicant)\setminus \matching(\applicant)$ and $\course'\succeq_\applicant \course\}
\bigcup\{(\applicant\course,\applicant'\course'): \course'\in P(\applicant)\setminus \matching(\applicant)$ and $\course'\succeq_\applicant \course\},$ with weight equal to $0$ if $\course' \simeq_\applicant \course$ or $-1$ if $\course' \succ_\applicant \course$.

\end{itemize}
\end{df}

That is, $G(\matching)$ includes one node per course, one node per applicant and one node per applicant-course pair that participates in $\matching$.
With respect to $E_{G(\matching)}$, a `$0$' (`$-1$') arc is an arc with weight $0$ ($-1$).
There is a `$0$' arc leaving from each exposed course node to any other
(non-course) node in $G(\matching)$ (arcs in $E_{G(\matching)}^{1}$).
Moreover, there is a $-1$ arc leaving each exposed applicant node $\applicant$ towards any course node or applicant-course node whose corresponding course is in $P(\applicant)\setminus\matching(\applicant)$ (arcs in $E_{G(\matching)}^{2}$).
Finally, there is an arc leaving each applicant-course node $\applicant\course$ towards any course
node or applicant-course node whose corresponding course $\course'$ is not in $\matching(\applicant)$ 
and $\course'\succeq_{\applicant} \course$. The weight of such an arc is $0$ if $\course' \simeq_\applicant \course$ or $-1$ if $\course' \succ_\applicant \course$ (arcs in $E_{G(\matching)}^{3}$).

%
The following theorem establishes the connection between Pareto optimality of a given matching and the absence of negative cost cycles in the extended envy graph corresponding to that matching.

\begin{theorem}\label{eeg-pom}
A matching $\matching$ is Pareto optimal if and only if its extended envy graph $G(\matching)$ has no negative cost cycles.
\end{theorem}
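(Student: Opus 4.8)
The plan is to combine Theorem~\ref{thm:characterization} with a direct dictionary between improving coalitions and negative cost cycles in $G(\matching)$. Since Theorem~\ref{thm:characterization} already states that $\matching$ is Pareto optimal exactly when it admits no improving coalition, it suffices to prove that $\matching$ admits an improving coalition if and only if $G(\matching)$ contains a negative cost cycle. The guiding intuition is that an applicant-course node $\applicant\course\in V_{\matching}$ records that $\applicant$ currently holds $\course$, that a `$-1$' arc records a strict improvement while a `$0$' arc records an indifferent move, and that the exposed-course arcs of $E^{1}_{G(\matching)}$ (weight $0$) together with the exposed-applicant arcs of $E^{2}_{G(\matching)}$ (weight $-1$) supply precisely the endpoints that let a path coalition close up into a cycle.

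First I would treat the direction from coalitions to cycles. Given an improving coalition $\cpath$ I read it off as a closed walk in $G(\matching)$, sending each consecutive matched pair $(\applicant_{i_k},\course_{j_k})$ to the node $\applicant_{i_k}\course_{j_k}$. A cyclic coalition maps to a cycle through $\applicant_{i_0}\course_{j_0},\dots,\applicant_{i_{r-1}}\course_{j_{r-1}}$ joined by $E^{3}_{G(\matching)}$ arcs; an alternating path coalition maps to the analogous cycle whose last two arcs run to the exposed-course node $\course_{j_r}$ and back to $\applicant_{i_0}\course_{j_0}$ along a weight-$0$ arc of $E^{1}_{G(\matching)}$; and an augmenting path coalition maps to the cycle that begins at the exposed-applicant node $\applicant_{i_0}$ via a weight-$(-1)$ arc of $E^{2}_{G(\matching)}$ and again closes through the exposed course $\course_{j_r}$. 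In each case the defining strict preference of $\applicant_{i_0}$ (respectively the exposed-applicant arc) contributes a `$-1$' arc while every remaining arc has weight in $\{0,-1\}$, so the walk has weight at most $-1$; as the coalition has no repeated applicant or course, the walk is in fact a simple negative cycle. Checking that each required arc exists is a routine verification against Definition~\ref{EEG}.

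For the converse I would start from a simple negative cost cycle $\gamma$ (any negative cost cycle contains one). The key structural observations are that an applicant node can be entered only along an arc of $E^{1}_{G(\matching)}$ (hence from an exposed course) and left only along an arc of $E^{2}_{G(\matching)}$ (hence the applicant is exposed), while a course node can be left only if it is exposed; consequently every applicant node and every course node on $\gamma$ is exposed, and if $\gamma$ contains an applicant node then it contains a course node as well. I then split according to the node types on $\gamma$. If $\gamma$ uses only applicant-course nodes it is a cyclic pseudocoalition, after rotating it so that the `$-1$' arc forced by negativity comes first. If $\gamma$ contains a course node but no applicant node, I cut $\gamma$ at a strict arc of $E^{3}_{G(\matching)}$ emanating from some node $\applicant^*\course^*$ and follow $\gamma$ forward to the first course node reached; this yields an alternating path pseudocoalition when $\applicant^*$ is full, and an augmenting path pseudocoalition started at the applicant node $\applicant^*$ when $\applicant^*$ is exposed (there $\applicant^*$ finds the next course acceptable and unassigned). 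If $\gamma$ contains an applicant node $\applicant$, I cut at its outgoing $E^{2}_{G(\matching)}$ arc and again follow forward to the first course node, obtaining an augmenting path pseudocoalition. In every case the intermediate $E^{3}_{G(\matching)}$ arcs supply the weak-preference chain, and the extracted sequence meets all coalition conditions except that an applicant or course might repeat; Lemma~\ref{lem:repetition} then delivers a genuine improving coalition.

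I expect the converse to be the main obstacle. The delicate points are the node-type bookkeeping (an applicant node on $\gamma$ forces both an exposed-applicant arc and the presence of a course node, hence the augmenting shape), the full-versus-exposed case distinction that decides whether cutting at a strict $E^{3}_{G(\matching)}$ arc produces an alternating or an augmenting coalition, and the observation that tracing forward always reaches a first exposed course node (which holds because $\gamma$ is simple and, in both non-trivial cases, contains a course node). Reliance on Lemma~\ref{lem:repetition} is essential precisely because a simple cycle in $G(\matching)$ need not be repetition-free as a coalition: two distinct applicant-course nodes can share an applicant or a course.
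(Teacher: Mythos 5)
Your proof is correct, and your forward direction (improving coalition to negative cycle) coincides with the paper's argument arc-for-arc. The converse also shares the paper's overall skeleton---extract a pseudocoalition from a simple negative cycle and invoke Lemma~\ref{lem:repetition}---but your extraction device is genuinely different. The paper takes a simple negative cycle of \emph{minimal length} and uses minimality to establish structure of the cycle itself (it contains at most one course node and at most one applicant node), so that the entire cycle can be read off as a single pseudocoalition; the placement of the mandatory `$-1$' arc at the head of the sequence and the fullness of the initial applicant are then dismissed as ``easy to verify,'' even though both points implicitly lean on minimality once more. You instead work with an arbitrary simple negative cycle and truncate it: cut at a strict $E^{3}_{G(\matching)}$ arc (or at an applicant node's outgoing $E^{2}_{G(\matching)}$ arc) and follow the cycle forward to the first course node, discarding the remainder. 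This buys three things: no minimality argument is needed at all; the strict preference required at the head of an alternating or cyclic coalition holds by construction, since you cut exactly at a `$-1$' arc; and you explicitly split on whether the cut applicant is full or exposed, which decides between the alternating and augmenting shapes---a condition that the paper's definition of alternating path coalition demands (fullness of $\applicant_{i_0}$) but whose verification its proof leaves implicit. Both arguments then finish identically via Lemma~\ref{lem:repetition}, which is indeed indispensable for precisely the reason you state: distinct nodes of $V_{\matching}$ on a simple cycle may still share an applicant or a course.
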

\begin{proof}
First we show that if $G(\matching)$ has no negative cost cycle, then $\matching$ is Pareto optimal.
Assume for a contradiction that $\matching$ is not Pareto optimal; this means that there exists an improving coalition 
$\mathfrak{\courseset}$. 
We examine each of the three different types of improving coalition separately.
Assume that $\mathfrak{\courseset}$ is an alternating path coalition, i.e.,
$\mathfrak{\courseset} = \langle \course_{j_0}, \applicant_{i_0}, \course_{j_1}, \applicant_{i_1},\ldots,\course_{j_{r-1}}\applicant_{i_{r-1}},\course_{j_r}\rangle$ 
where $r\geq 1$, $(\applicant_{i_k},\course_{j_k}) \in \matching$ ($0\leq k\leq r-1$), $(\applicant_{i_{k-1}},\course_{j_k}) \not\in\matching$ ($1\leq k \leq r$), $\applicant_{i_0}$ is full, and $\course_{j_r}$ is an exposed course.
Furthermore, $\course_{j_1}\succ_{\applicant_{i_0}}\course_{j_0}$ and, if $r\geq 2$, $\course_{j_{k+1}}\succeq_{\applicant_{i_k}} \course_{j_k}$ ($1\leq k \leq r-1$).
By Definition~\ref{EEG}, there exists in $G(\matching)$ an arc 
  $(\applicant_{i_k}\course_{j_k},\applicant_{i_{k+1}}\course_{j_{k+1}})$ for each $0\leq k \leq r-2$,
an arc 
$(\applicant_{i_{r-1}}\course_{j_{r-1}},\course_{j_{r}})$, and an arc $(\course_{j_{r}},\applicant_{i_0}\course_{j_0})$, thus creating a cycle $K$ in $G(\matching)$.
Since all arcs of $K$ have weight $0$ or $-1$ and 
$(\applicant_{i_0}\course_{j_0},\applicant_{i_{1}}\course_{j_{1}})$ has weight $-1$ (because $\course_{j_1}\succ_{\applicant_{i_0}}\course_{j_0}$), $K$ is a negative cost cycle, a contradiction.
A similar argument can be used for the case where $\cpath$ is an augmenting path coalition;
our reasoning on the implied cycle would start with a `$-1$' arc $(\applicant_{i_0},\applicant_{i_1}\course_{j_1})$
and would conclude with a `$0$' arc $(\course_{j_{r}},\applicant_{i_0})$.
 Finally, for a cyclic coalition $\mathfrak{\courseset}$, the implied negative cost cycle includes all arcs $(\applicant_{i_k}\course_{j_k},\applicant_{i_{k+1}}\course_{j_{k+1}})$  for $0\leq k \leq r-1$ taken modulo $r$, where at least one arc has weight $-1$ (otherwise, $\mathfrak{\courseset}$ would not be a cyclic coalition). In all cases, the existence of an improving coalition implies a negative cost cycle in $G(\matching)$, a contradiction.

Next, we show that if $\matching$ is Pareto optimal, $G(\matching)$ has no negative cost cycles. We prove this via contradiction.
Let $\matching$ be a POM, and assume that $G(\matching)$ has a negative cost cycle.
Given that there exists a negative cost cycle in $G(\mu)$, there must also exist one that is simple and of minimal length. 
%
Hence, without loss of generality, let $K$ be a minimal (length) simple negative cost cycle in $G(\matching)$. 
We commence by considering that $K$ contains only nodes in $V_\matching$.
We create a sequence $\mathcal{K}$ corresponding to $K$ as follows.
Starting from an applicant-course node of $K$ with an outgoing '$-1$' arc, say $\applicant_{i_0}\course_{j_0}$, we first place $\course_{j_0}$ in $\mathcal{K}$ followed by $\applicant_{i_0}$, and continue by repeating this process for all nodes in $K$.
It is then easy to verify, by Definition \ref{EEG} and the definition of a cyclic coalition, that $K$ corresponds to a sequence
$\mathcal{K}=\langle\course_{j_0},\applicant_{i_0},\ldots,\course_{j_{r-1}},\applicant_{i_{r-1}}\rangle$
that
satisfies all conditions of a cyclic coalition with the exception that some courses or applicants may appear more than once. But this means that, by Lemma~\ref{lem:repetition}, $\mu$ admits an improving coalition, which contradicts that $\mu$ is a POM.

To conclude the proof, we consider the 
case where $K$ 
includes at least one node from $V_{\applicantset}$ or $V_{\courseset}$.
Note that if $K$ contains a node from $V_\applicantset$,
then it also contains a node from $V_\courseset$. This is because, by Definition~\ref{EEG}, the only incoming arcs to an applicant node are of type $E^1_{G(\matching)}$. 
Therefore $K$ includes at least one course node.
We now 
claim that since $K$ is of minimal length, it will contain only one course node. Assume for a contradiction that $K$ contains more than one course nodes. Since $K$ is a negative cost cycle, it contains at least one `$-1$' arc which leaves an applicant node $\applicant$ (or an applicant-course node $\applicant\course'$). Let $\course$ be the first course node in $K$ that appears  after this `$-1$' arc. Recall that for $\course$ to be in $K$, $\course$ is exposed in $\matching$ and there exists a `$0$' arc from $\course$ to any applicant or applicant-course node of $G(\matching)$. Hence, there also exists such an arc from $\course$ to $\applicant$ (or to $\applicant\course'$), thus forming a negative cost cycle  smaller than $K$, a contradiction. 
Note that this also implies that 
$K$ contains at most one node in $V_\applicantset$. This is because,
by Definition~\ref{EEG}, the only incoming arcs to nodes in $V_\applicantset$ are from nodes in $V_\courseset$,
therefore the claim follows from the facts that $\course$ is the only course node in $K$, and $K$ is a simple cycle. 
We create a sequence $\mathcal{K}$ corresponding to $K$ as follows. If $K$ contains an applicant node $\applicant$, we start $\mathcal{K}$ with $\applicant$. If $K$ contains a course node $\course$, 
we end $\mathcal{K}$ with $\course$. Any applicant-course node $\applicant'\course'$ is handled by placing 
$\course'$ first and then $\applicant'$.
It is now easy to verify, by Definition \ref{EEG} and the definition of a cyclic coalition, that $K$ corresponds to a sequence  
$\mathcal{K}=\langle\applicant_{i_0},\course_{j_1},\applicant_{i_1},\ldots,\course_{j_{r-1}},\applicant_{i_{r-1}},\course_{j_r}\rangle$ (if $K$ contains an applicant node) or a sequence  $\mathcal{K}=\langle\course_{j_0},\applicant_{i_0},\ldots,\course_{j_{r-1}},\applicant_{i_{r-1}},\course_{j_r}\rangle$ (if $K$ contains no applicant node), such that $\mathcal{K}$ satisfies all conditions of either an augmenting path coalition (if $K$ contains an applicant node) or an alternating path coalition (if $K$ contains no applicant node), with the exception that some courses or applicants may appear more than once. But this means that, by Lemma~\ref{lem:repetition}, $\mu$ admits an augmenting path coalition (or an alternating path coalition), which contradicts that $\mu$ is a POM.
\end{proof}

Theorem~\ref{eeg-pom} implies that to test whether a matching $\matching$ is Pareto optimal or not, it suffices to create the corresponding extended envy graph and check whether it contains a negative cost cycle. It is easy to see that we can create $G(\matching)$ in polynomial time. To test whether $G(\matching)$ admits a negative cycle, we can make use of various algorithms that exist in the literature (see \cite{CG-99} for a survey), e.g. Moore-Bellman-Ford which terminates in $O(|V_{G(\matching)}||E_{G(\matching)}|)$ time. However, as all arcs in $G(\matching)$ are of costs either $0$ or $-1$ and hence integer, we can use a faster algorithm of \cite{G-92} which, for our setting, terminates in $O(\sqrt{|V_{G(\matching)}|}|E_{G(\matching)}|)$ time.


\begin{corollary}
Given a CA instance $\inputsetting$ and a matching $\matching$, we can check in polynomial time whether $\matching$ is Pareto optimal in $\inputsetting$.
\end{corollary}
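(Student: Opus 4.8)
The plan is to derive the corollary directly from Theorem~\ref{eeg-pom}, which reduces the question ``is $\matching$ a POM?'' to the question ``does $G(\matching)$ contain a negative cost cycle?''. Since the latter is a standard algorithmic problem solvable in polynomial time, the only thing left to verify is that $G(\matching)$ can itself be built in polynomial time, i.e.\ that its numbers of nodes and arcs are polynomially bounded in the size of $\inputsetting$.

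First I would bound the size of $G(\matching)$. By Definition~\ref{EEG}, $V_{G(\matching)} = V_\applicantset \cup V_\courseset \cup V_{\matching}$, so $|V_{G(\matching)}| = \numapplicants + \numcourses + |\matching|$. Since every applicant is matched to at most $\numcourses$ courses (there being only $\numcourses$ courses in total), we have $|\matching| \leq \numapplicants\numcourses$, whence $|V_{G(\matching)}| = O(\numapplicants\numcourses)$. The arc set $E_{G(\matching)} = E_{G(\matching)}^{1} \cup E_{G(\matching)}^{2} \cup E_{G(\matching)}^{3}$ consists, in each of its three parts, of arcs leaving a node of $G(\matching)$ towards another node of $G(\matching)$; hence $|E_{G(\matching)}| = O(|V_{G(\matching)}|^2)$, which is again polynomial in the size of $\inputsetting$. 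Constructing $G(\matching)$ then amounts to iterating over the (polynomially many) candidate arcs described in Definition~\ref{EEG} and, for each, performing a constant number of membership and preference comparisons; this is clearly done in polynomial time.

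It then remains to detect a negative cost cycle in $G(\matching)$. By Theorem~\ref{eeg-pom}, $\matching$ is Pareto optimal in $\inputsetting$ if and only if $G(\matching)$ has no such cycle, so I would run any polynomial-time negative-cycle detection routine and report ``Pareto optimal'' exactly when none is found. The Moore--Bellman--Ford algorithm already settles this in $O(|V_{G(\matching)}|\,|E_{G(\matching)}|)$ time; moreover, because every arc of $G(\matching)$ has integer weight in $\{0,-1\}$, one may instead invoke the algorithm of \cite{G-92}, obtaining the sharper bound $O(\sqrt{|V_{G(\matching)}|}\,|E_{G(\matching)}|)$. I do not expect any genuine obstacle here: all the substantive work has already been carried out in Theorem~\ref{eeg-pom}, and the only points requiring care are the polynomial size bound above and the observation that the standard algorithms detect cycles of strictly negative total weight, and thus correctly ignore the $0$-weight cycles that $G(\matching)$ may contain through exposed-course nodes or ties, which is exactly what the characterization requires.
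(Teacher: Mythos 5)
Your proposal is correct and follows essentially the same route as the paper: apply Theorem~\ref{eeg-pom}, note that $G(\matching)$ has polynomial size and can be built in polynomial time, and then run a standard negative-cycle detection algorithm (Moore--Bellman--Ford, or the faster algorithm of \cite{G-92} exploiting the $\{0,-1\}$ integer weights). The explicit size bounds and the remark about ignoring zero-weight cycles are details the paper leaves implicit, but they do not change the argument.
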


\section{Constructing Pareto optimal matchings}\label{GSDT}
We propose an algorithm for finding a POM in an instance of CA, which is in a certain sense a generalization of Serial Dictatorship thus named \emph{Generalized Serial Dictatorship Mechanism with ties} (GSDT).  The algorithm starts by setting the quotas of all applicants to $0$ and those of courses at the original values given by $q$. At each stage $i$, the algorithm selects a single applicant whose original capacity has not been reached, and increases only her capacity by $1$. The algorithm terminates after $B=\sum_{a\in A} b(a)$ stages, i.e., once the original capacities of all applicants have been reached. In that respect, the algorithm assumes a `multisequence' $\Sigma=(a^1,a^2,\dots,a^B)$ of applicants such that each applicant $a$ appears $b(a)$ times in $\Sigma$; e.g., for the instance of Table \ref{tab1} and the sequence $\Sigma=(\applicant_1,\applicant_1,\applicant_2,\applicant_2,\applicant_3,\applicant_2,\applicant_3)$, the vector of capacities evolves as follows:
$$(0,0,0), (1,0,0), (2,0,0), (2,1,0), (2,2,0), (2,2,1), (2,3,1), (2,3,2).$$

Let us denote the vector of applicants' capacities in stage $i$ by $b^i$, i.e., $b^0$ is the all-zeroes vector and $b^B=b$. Clearly, each stage corresponds to an instance $I^i$ similar to the original instance except for the capacities vector $b^i$. At each stage $i$, our algorithm obtains a matching $\matching^i$ for the instance $I^i$. Since the single matching of stage $0$, i.e., the empty matching, is a POM in $I^0$, the core idea is to modify $\matching^{i-1}$ in such way that if $\matching^{i-1}$ is a POM with respect to $I^{i-1}$ then $\matching^i$ is a POM with respect to $I^i$. To achieve this, the algorithm relies on the following flow network. 

Consider the digraph $D=(V,E)$. Its node set is $V=A\cup T\cup C\cup \{\sigma, \tau\}$ where $\sigma$ and $\tau$ are the source and the sink and vertices in $T$ correspond to the ties in the preference lists of all applicants; i.e., $T$ has a node $(a,t)$ per applicant $\applicant$ and tie $t$ such that $\eqclass^a_t \neq \emptyset$. Its arc set is $E=E_1\cup E_2\cup E_3\cup E_4$ where $E_1=\{(\sigma,a): a\in A\}$, $E_2=\{(a,(a,t)): a\in A, \eqclass^a_t \neq \emptyset \}$, $E_3=\{((a,t),c): c\in \eqclass^a_t\}$ and $E_4=\{(c,\tau): c\in C\}$. The graph $D$ for the instance of Table \ref{tab1} appears in Figure \ref{f1}, where an oval encircles all the vertices of $T$ that correspond to the same applicant, i.e., one vertice per tie.

Using digraph $D=(V,E)$, we obtain a flow network $N^i$ at each stage $i$ of the algorithm, i.e., a network corresponding to instance $I^i$, by appropriately varying the capacities of the arcs. (For an introduction on network flow algorithms see, e.g., \cite{Ahuja-etal}.) The capacity of each arc in $E_3$ is always $1$ (since each course may be received at most once by each applicant) and the capacity of an arc $e=(c,\tau)\in E_4$ is always $q(c)$. The capacities of all arcs in $E_1 \cup E_2$ are initially $0$ and, at stage $i$, the capacities of only certain arcs associated with applicant $\applicant^i$ are increased by $1$. For this reason, for each applicant $\applicant$ we use the variable $curr(\applicant)$ that indicates her `active' tie; initially, $curr(\applicant)$ is set to $1$ for all $\applicant \in \applicantset$.

\setlength{\textfloatsep}{10pt}
\begin{figure}[ht]
\vspace{-.5cm}
\begin{center}
	\includegraphics[width=10cm]{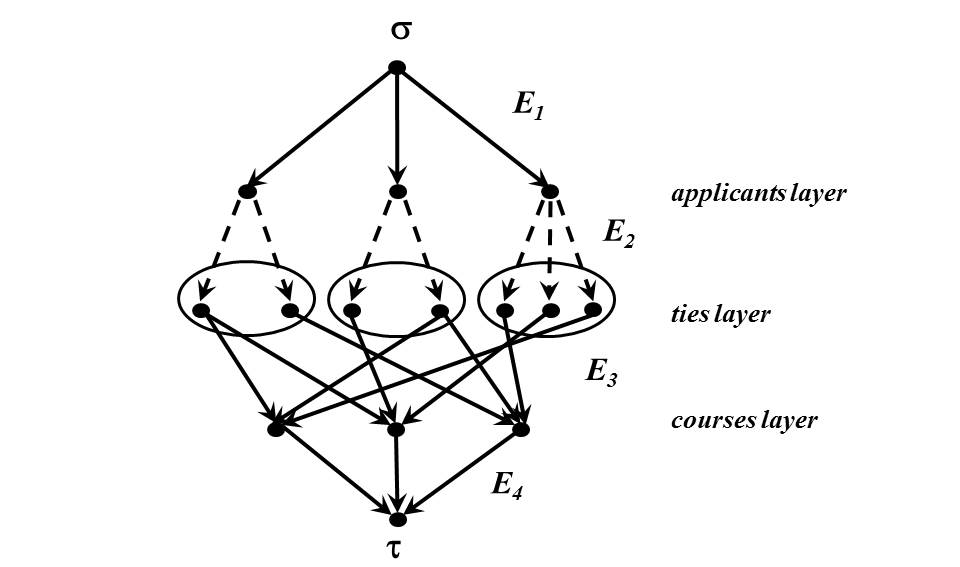}
\end{center}
\caption{Digraph $D$ for the instance $I$ from Table \ref{tab1}.}\label{f1}
\end{figure}

In stage $i$, the algorithm computes a maximum flow $f^i$ whose saturated arcs in $E_3$ indicate the corresponding matching $\matching^i$. The algorithm starts with $f^0=0$ and $\matching^0 = \emptyset$. Let the applicant $\applicant^i \in A$ be a copy of applicant $\applicant$ considered in stage $i$. The algorithm increases by $1$ the capacity of arc $(\sigma, \applicant) \in E_1$ (i.e., the applicant is allowed to receive an additional course). It then examines the tie $curr(\applicant)$ to check whether the additional course can be received from tie $curr(\applicant)$. To do this, 
the capacity of arc $(\applicant, (\applicant,curr(\applicant))) \in E_2$ is increased by $1$. 
The network in stage $i$ where tie $curr(a^i)$ is examined is denoted by $N^{i,curr(a^i)}$. If there is an augmenting $\sigma-\tau$ path in this network, the algorithm augments the current flow $f^{i-1}$ to obtain $f^i$, accordingly augments $\matching^{i-1}$ to obtain $\matching^i$ (i.e., it sets $\matching^i$ to the symmetric difference of $\matching^{i-1}$ and all pairs $(\applicant,\course)$ for which there is an arc $((\applicant,t),\course)$ in the augmenting path) and proceeds to the next stage. Otherwise, it decreases the capacity of $(\applicant, (\applicant,curr(\applicant)))$ by $1$ (but not the capacity of arc $(\sigma, \applicant)$) and it increases $curr(\applicant)$ by $1$ to examine the next tie of $\applicant$; 
if all (non-empty) ties have been examined, the algorithm proceeds to the next stage without augmenting the flow. 
Note that an augmenting $\sigma-\tau$ path in the network $N^{i,curr(a^i)}$ corresponds to an augmenting path coalition in $\matching^{i-1}$ with respect to $\inputsetting^{i}$.

A formal description of GSDT is provided by Algorithm \ref{alg1}, where $w(e)$ denotes the capacity of an arc $e \in E$ and $\oplus$ denotes symmetric difference. Observe that all arcs in $E_2$ are saturated, except for the arc corresponding to the current applicant and tie, thus any augmenting path has one arc from each of $E_1$, $E_2$ and $E_4$ and all other arcs from $E_3$; as a consequence, the number of courses each applicant receives at stage $i$ in any tie cannot decrease at any subsequent step. Also, $\matching^i$ dominates $\matching^{i-1}$ with respect to instance $I^i$ if and only if there is a flow in $N^i$ that saturates all arcs in $E_2$.

\begin{algorithm}[h]
{\bf Input:} an instance $I$ of {\sc CA} and a multisequence $\Sigma$ \\
$f^0:=0$; $\matching^0 := \emptyset$; \\
{\bf for each} $\applicant \in \applicantset$, $curr(\applicant):=1$; \\
{\bf for } $i=1,2,\dots, B$ {\bf do }\\
\{\\
\-\hspace{0.6cm} consider the applicant $\applicant=\applicant^i$; \\
\-\hspace{0.6cm} $w(\sigma, \applicant)$++; \\
\-\hspace{0.6cm} $\path := \emptyset$;

\-\hspace{0.6cm} {\bf while } $\path=\emptyset$ {\bf and} $curr(\applicant)\leq \numcourses$  {\bf and}  $\eqclass^{\applicant}_{curr(\applicant)} \ne \emptyset$ {\bf do }\\
\-\hspace{0.6cm} \{ \\
\-\hspace{1.2cm}   $w(\applicant,(\applicant,curr(\applicant)))$++;\\
\-\hspace{1.2cm} $\path :=$ augmenting path in $N^{i,curr(a)}$ with respect to $f^{i-1}$;\\
\-\hspace{1.2cm} {\bf if } $\path=\emptyset$ {\bf  then} \{ $w(\applicant,(\applicant,curr(\applicant)))$-\,-; $curr(a)$++\};\\
\-\hspace{0.6cm} \} \\

\-\hspace{0.6cm} {\bf if } $\path\ne \emptyset$ {\bf then } \{ $f^i:=f^{i-1}\oplus \path$; \\
\-\hspace{2.9cm} $\matching^i:=\matching^{i-1}\oplus \{(\applicant,\course):((\applicant,t),\course) \in \path $ for some tie $t\}$; \} \\
\-\hspace{0.6cm} {\bf otherwise } \{ $f^i:=f^{i-1}$; $\matching^i:=\matching^{i-1}$; \} \\
\}\\
{\bf return} $\mu^B$;\\
\caption{Producing a POM for any instance of {\sc ca}}\label{alg1}
\end{algorithm}

To prove the correctness of GSDT, we need two intermediate lemmas. Let $e_t \in \mathbb{R}^{n_2}$ be the vector having $1$ at entry $t$ and $0$ elsewhere.
\begin{lemma}\label{lem1}
Let $N^{i,t}$ be the network at stage $i$ while tie $t$ of applicant $\applicant^i$ is examined. Then, there is an augmenting path with respect to $f^{i-1}$ in $N^{i,t}$ if and only if there is a matching $\matching$ such that
$$\chi_\applicant(\matching(\applicant))=\chi_\applicant(\matching^{i-1}(\applicant)) \mbox{\ for each\ } \applicant\ne \applicant^i \mbox{\ and\ } \chi_{\applicant^i}(\matching(\applicant^i))=\chi_{\applicant^i}(\matching^{i-1}(\applicant^i))+  e_t.$$
\end{lemma}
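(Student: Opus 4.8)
The plan is to exploit the standard correspondence between integral flows in $N^{i,t}$ and matchings of $\inputsetting^{i}$: a feasible integral flow induces the matching whose pairs $(\applicant,\course)$ are exactly those for which the arc $((\applicant,t'),\course)\in E_3$ carries one unit of flow, and conversely. Under this correspondence the flow on an arc $(\applicant,(\applicant,t'))\in E_2$ equals $|\matching(\applicant)\cap\eqclass^{\applicant}_{t'}|$, i.e.\ the $t'$-th coordinate of $\chi_\applicant(\matching(\applicant))$, the flow on $(\sigma,\applicant)$ equals $|\matching(\applicant)|$, and the total flow value equals $\sum_{\applicant}|\matching(\applicant)|$. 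Since $f^{i-1}$ corresponds to $\matching^{i-1}$, the whole statement reduces to comparing flow values and invoking the augmenting-path theorem: a flow is maximum if and only if its residual network admits no augmenting $\sigma$--$\tau$ path.

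For the ``if'' direction I would turn the given $\matching$ into the integral flow $g$ on $N^{i,t}$ prescribed by the correspondence and verify that $g$ is feasible. The only nonobvious capacities are those of $E_2$. For every $\applicant\ne\applicant^i$ and every tie $t'$ of $\applicant^i$ other than $t$, the relevant $E_2$ arc is saturated by $f^{i-1}$, and the hypothesis $\chi_\applicant(\matching(\applicant))=\chi_\applicant(\matching^{i-1}(\applicant))$ (resp.\ coordinatewise equality for $\applicant^i$ away from $t$) shows $g$ respects it; for $(\applicant^i,(\applicant^i,t))$ the capacity was just raised by one, which matches the $+e_t$ increment exactly. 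Here I would also use that $\matching^{i-1}$ never assigns $\applicant^i$ a course from a tie later than $curr(\applicant^i)=t$, so the coordinates of $\chi_{\applicant^i}$ beyond $t$ vanish and the corresponding still-zero-capacity $E_2$ arcs carry no flow. Feasibility of the $E_1$, $E_3$ and $E_4$ capacities is immediate. Then $g$ has value $\sum_\applicant|\matching(\applicant)|=\sum_\applicant|\matching^{i-1}(\applicant)|+1=|f^{i-1}|+1$, so $f^{i-1}$ is not maximum in $N^{i,t}$ and an augmenting path exists.

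For the ``only if'' direction I would take a simple augmenting path $\path$, push one unit along it to obtain an integral flow of value $|f^{i-1}|+1$, and let $\matching$ be the induced matching. The heart of the argument is a structural claim: $\path$ uses exactly one $E_2$ arc, namely the unsaturated arc $(\applicant^i,(\applicant^i,t))$, traversed forward. This follows from a cut argument across the partition separating $\{\sigma\}\cup\applicantset$ from the rest, where the only forward arcs are in $E_2$ and the only residual backward arcs are reversed $E_2$ arcs; since $\path$ runs from $\sigma$ to $\tau$ it crosses this cut a net one time forward, and as the residual network contains a single forward $E_2$ arc (all others being saturated by the invariant stated before Algorithm~\ref{alg1}), $\path$ uses $(\applicant^i,(\applicant^i,t))$ once and no reversed $E_2$ arc. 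Consequently the $E_2$ flow changes only on $(\applicant^i,(\applicant^i,t))$, where it rises by one; translating back gives $\chi_{\applicant^i}(\matching(\applicant^i))=\chi_{\applicant^i}(\matching^{i-1}(\applicant^i))+e_t$, while for each $\applicant\ne\applicant^i$ the $E_3$ modifications are swaps inside a single tie that preserve every coordinate, so $\chi_\applicant(\matching(\applicant))=\chi_\applicant(\matching^{i-1}(\applicant))$.

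The step I expect to demand the most care is the structural claim in the ``only if'' direction. It rests on the capacity invariant that $f^{i-1}$ saturates every $E_2$ arc except $(\applicant^i,(\applicant^i,t))$, which itself depends on the algorithm's bookkeeping---capacities of failed ties being decremented, $curr(\applicant)$ being monotone, and $(\sigma,\applicant)$ never being decremented---so I would isolate and record this invariant first. Once it and the cut argument are in place, the remaining checks (feasibility of $g$ and the fact that within-tie $E_3$ swaps leave the characteristic vector unchanged) are routine.
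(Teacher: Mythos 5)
Your proposal is correct and follows essentially the same route as the paper's proof: both directions rest on the correspondence between integral flows in $N^{i,t}$ and matchings, with the ``if'' direction obtained by converting $\matching$ into a feasible flow of value one more than $|f^{i-1}|$ and invoking the standard augmenting-path theorem, and the ``only if'' direction obtained by pushing one unit along the path and observing that the per-tie counts change only at $(\applicant^i,(\applicant^i,t))$. Your cut argument simply makes explicit the invariant the paper records just before Algorithm~\ref{alg1} (all $E_2$ arcs are saturated except the one for the current applicant and tie, so any augmenting path uses one arc from each of $E_1$, $E_2$, $E_4$ and otherwise only $E_3$ arcs), so the two proofs differ only in level of detail, not in approach.
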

\begin{proof}
Note that the flows $f^{i-1}$ and $f^i$, corresponding to $\matching^{i-1}$ and $\matching^i$ respectively, are feasible in $N^{i}$. Moreover, $f^{i-1}$ is feasible in $N^{i,t}$ and, if there is an augmenting path with respect to $f^{i-1}$ in $N^{i,t}$, then $f^i$ is feasible in $N^{i,t}$ too.

If there is a path augmenting $f^{i-1}$ by $1$ in $N^{i,t}$ thus obtaining $f^{i}$, the number of courses assigned per tie and applicant remain identical except for $\applicant^i$ that receives an extra course from tie $t$. Thus, $\chi_{\applicant}(\matching(\applicant))$ is identical to $\chi_{\applicant}(\matching^{i-1}(\applicant))$ for all $\applicant \in \applicantset$ except for $\applicant^i$ whose characteristic vector has its $t$'th entry  increased by $1$.

Conversely, if the above equation holds, flow $f^i$ is larger than $f^{i-1}$ in the network $N^{i,t}$. Thus standard network flow theory implies the existence of a path augmenting $f^{i-1}$. As a side remark, since the network $N^{i,t}$ has similar arc capacities with the $N^{i-1}$ except for the arcs $(\sigma, \applicant^i)$ and $(\applicant^i,(\applicant^i,t))$, this augmenting path includes these two arcs. 
 
\end{proof}

\begin{lemma}\label{lem2}
Let $S\succeq_{\applicant} U$ and $|S|\ge |U|$. If $c_S$ and $c_U$ denote a least preferred course of applicant $a$ in $S$ and $U$, respectively, then  $S\backslash\{c_S\}\succeq_{\applicant}  U\backslash\{c_U\}$.
\end{lemma}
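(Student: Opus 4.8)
The plan is to argue entirely at the level of generalized characteristic vectors, since $\weaklypref_\applicant$ is by definition the lexicographic order on them. Put $s_t=|S\cap\eqclass^{\applicant}_t|$ and $u_t=|U\cap\eqclass^{\applicant}_t|$, so that $\chi_\applicant(S)=(s_1,s_2,\ldots)$ and $\chi_\applicant(U)=(u_1,u_2,\ldots)$. A least preferred course of $\applicant$ in a nonempty set sits in that set's last nonempty indifference class; writing $p=\max\{t:s_t>0\}$ and $r=\max\{t:u_t>0\}$, deleting $c_S$ lowers exactly the coordinate $s_p$ by one and deleting $c_U$ lowers exactly $u_r$ by one, leaving every other coordinate unchanged. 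Hence the lemma is equivalent to the purely combinatorial claim $\chi_\applicant(S\setminus\{c_S\})\ge_{lex}\chi_\applicant(U\setminus\{c_U\})$, and that is what I would establish.

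First I would dispatch the case $S\tie_\applicant U$: then $s_t=u_t$ for every $t$, so $p=r$ and both deletions subtract one from the very same coordinate, leaving the two vectors equal; thus $S\setminus\{c_S\}\tie_\applicant U\setminus\{c_U\}$. The substantive case is $S\strictlypref_\applicant U$, where there is a first index $t^*$ with $s_{t^*}>u_{t^*}$ and $s_t=u_t$ for all $t<t^*$. Because $s_{t^*}\ge 1$ we automatically have $p\ge t^*$, so I would organise the comparison by where $r$ lies and whether $p=t^*$. In three of the branches the inequality survives for free: if $r<t^*$, the first coordinate at which the new vectors disagree is $r$, where $S$ retains $s_r=u_r$ while $U$ has dropped one; and if $p>t^*$, or if $p=t^*=r$, the strict gap at coordinate $t^*$ persists (each side's $t^*$-coordinate falls by at most one, and in the latter branch both fall, so the gap $s_{t^*}-u_{t^*}\ge 1$ is preserved). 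In each of these branches the first index of difference already exhibits $S\setminus\{c_S\}\strictlypref_\applicant U\setminus\{c_U\}$.

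The single delicate branch --- and the only place the hypothesis $|S|\ge|U|$ enters --- is $p=t^*$ together with $r>t^*$. Here deleting $c_S$ drops the $t^*$-coordinate to $s_{t^*}-1\ge u_{t^*}$; if this is still strict we conclude as before, so the genuinely dangerous situation is $s_{t^*}-1=u_{t^*}$, where the $t^*$-coordinates tie while $S\setminus\{c_S\}$ is now empty past $t^*$ and $U$ still carries courses in classes beyond $t^*$ --- taken naively this would let $U\setminus\{c_U\}$ win lexicographically. The resolution is a cardinality count: since $p=t^*$ gives $|S|=\sum_{t\le t^*}s_t$ and $u_t=s_t$ for $t<t^*$, the equality $s_{t^*}-1=u_{t^*}$ yields $|U|=|S|-1+\sum_{t>t^*}u_t$, so $|S|\ge|U|$ forces $\sum_{t>t^*}u_t\le 1$; combined with $r>t^*$ this pins the sum to exactly one course, namely $u_r=1$ and $u_t=0$ for $t^*<t<r$. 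Deleting $c_U$ then erases that lone coordinate, so $\chi_\applicant(S\setminus\{c_S\})$ and $\chi_\applicant(U\setminus\{c_U\})$ coincide in every coordinate and the two sets are tied. I expect this cardinality bookkeeping, together with the verification that it is genuinely the only configuration producing a post-deletion tie rather than strict dominance, to be the one point requiring care; every other branch follows immediately from the first index of difference.
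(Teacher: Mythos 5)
Your proof is correct and follows essentially the same route as the paper's: both argue on generalized characteristic vectors, split into the tie case and the strict case with first differing index, dispose of the easy branches by locating the deleted courses' classes relative to that index, and invoke $|S|\ge|U|$ only in the tight branch ($s_{t^*}=u_{t^*}+1$ with $S$ empty beyond $t^*$) to force $\sum_{t>t^*}u_t\le 1$, yielding either strict dominance or exact equality of the post-deletion vectors. The only difference is cosmetic bookkeeping (you case on the positions $p,r$ of the deleted courses, the paper cases on whether $s_{t^*}\ge u_{t^*}+2$ and whether $S$ has courses past $t^*$), so no further comment is needed.
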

\begin{proof}
For convenience, we denote $S\backslash\{c_S\}$ by $S'$ and $U\backslash\{c_U\}$ by $U'$. Let $$\chi_{\applicant}(S)=(s_1,s_2,\ldots, s_{n_2}), \ \chi_{\applicant}(U)=(u_1,u_2,\ldots, u_{n_2})$$
and
$$\chi_{\applicant}(S')=(s'_1,s'_2,\ldots, s'_{n_2}),\ \chi_{\applicant}(U')=(u'_1,u'_2,\ldots, u'_{n_2}).$$ 
If $S \simeq_{\applicant} U$, then $c_S\simeq_a c_U$ and $\chi_{\applicant}(S)=\chi_{\applicant}(U)$, therefore $\chi_{\applicant}(S')=\chi_{\applicant}(U')$ and $S' \simeq_{\applicant} U'$. Otherwise, $S \succ_{\applicant} U$ implies that 
there is $k \in [n_2]$ such that $s_j=u_j$ for each $j < k$ and $s_{k}>u_{k}$. 

If there is $j\geq k+1$ with $s_j>0$, then $c_S\in C_r^a$ with $r\geq k+1$. It follows that $s'_k=s_k>u_k\geq u'_k$ while $s'_j \geq u'_j$ for all $j<k$, thus $\chi_a(S')>_{lex}\chi_a(U')$, i.e., $S' \succ_{\applicant} U'$. The same follows if $s_{k} \geq u_{k}+2$ because then $s'_k \geq s_k-1 > u_k\geq u'_k$.

It remains to examine the case where $s_j=0$ for all $j\geq k+1$ and $s_{k}= u_{k}+1$. In this case, $c_S\in C_k^a$  thus $s'_k=s_k-1=u_k$, while $|S|\ge |U|$ implies that either $u_j=0$ for all $j \geq k+1$ or there is a single $k'>k$ such that $u_{k'}=1$. In the former case $c_U\in C_r^a$ for $r\le k$ thus $\chi_a(S')>_{lex}\chi_a(U')$, whereas in the latter one $u'_k=u_k=s_k-1=s'_k$ and $s'_j=u'_j=0$ for all $j>k$ hence $\chi_a(S')= \chi_a(U')$.
 \end{proof}

\begin{theorem}\label{thm:alg1-pom} 
For each $i$, the matching $\matching^{i}$ obtained by GSDT is a POM for instance $I^{i}$.
\end{theorem}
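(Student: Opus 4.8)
The plan is to prove the statement by induction on the stage index $i$, working directly from the definition of Pareto optimality (``no matching Pareto dominates $\matching^{i}$'') rather than through the coalition characterisation, since Lemmas~\ref{lem1} and~\ref{lem2} are tailored precisely to a domination argument. The base case is immediate: in $I^{0}$ every applicant has capacity $0$, so $\matching^{0}=\emptyset$ is the only matching and hence a POM. For the step I would assume that $\matching^{i-1}$ is a POM for $I^{i-1}$ and show the same for $\matching^{i}$ in $I^{i}$. The two instances differ only in that $\applicant:=\applicant^{i}$ has $b^{i}(\applicant)=b^{i-1}(\applicant)+1$, all else being equal. By Lemma~\ref{lem1} (applied to the single augmentation GSDT performs at stage $i$, or to its absence), $\chi_{\applicant'}(\matching^{i}(\applicant'))=\chi_{\applicant'}(\matching^{i-1}(\applicant'))$ for every $\applicant'\neq\applicant$, while for $\applicant$ either $\chi_{\applicant}(\matching^{i}(\applicant))=\chi_{\applicant}(\matching^{i-1}(\applicant))+e_{t}$ for the tie $t$ filled, or the two are equal; in particular $\matching^{i}$ weakly dominates $\matching^{i-1}$.

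Next I would assume, for contradiction, a matching $\nu$ that Pareto dominates $\matching^{i}$ in $I^{i}$, and try to manufacture from it a matching that contradicts the inductive hypothesis on $\matching^{i-1}$. Since $\nu(\applicant')\weaklypref_{\applicant'}\matching^{i}(\applicant')\weaklypref_{\applicant'}\matching^{i-1}(\applicant')$ for all $\applicant'$, $\nu$ already weakly dominates $\matching^{i-1}$; the only way it can fail to be feasible for $I^{i-1}$ is if $|\nu(\applicant)|=b^{i}(\applicant)=b^{i-1}(\applicant)+1$. In that event I would delete a least preferred course $c_S$ of $\applicant$ from $\nu(\applicant)$ to obtain a matching $\nu'$ feasible for $I^{i-1}$, and invoke Lemma~\ref{lem2} with $S=\nu(\applicant)$ and $U=\matching^{i}(\applicant)$ (legitimate since $\nu(\applicant)\weaklypref_{\applicant}\matching^{i}(\applicant)$ and $|S|=b^{i}(\applicant)\ge|U|$) to get $\nu'(\applicant)\weaklypref_{\applicant}\matching^{i}(\applicant)\setminus\{c_U\}$. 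A short lexicographic check, using $\chi_{\applicant}(\matching^{i}(\applicant))=\chi_{\applicant}(\matching^{i-1}(\applicant))+e_{t}$, then shows $\matching^{i}(\applicant)\setminus\{c_U\}\weaklypref_{\applicant}\matching^{i-1}(\applicant)$, so that $\nu'$ weakly dominates $\matching^{i-1}$ at every applicant. If the strict gain of $\nu$ occurs at some applicant other than $\applicant$, it survives in $\nu'$ and contradicts the optimality of $\matching^{i-1}$; and if $\applicant$ strictly gains while the above lexicographic step is strict, the same contradiction follows.

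The hard part will be the residual case in which $\applicant=\applicant^{i}$ is the sole applicant strictly improved by $\nu$ and the lexicographic step is tight---that is, a least preferred course of $\matching^{i}(\applicant)$ lies in the tie $t$ that GSDT just filled (or GSDT did not augment at all), so that trimming $\nu$ annihilates the strict improvement and the reduction to $I^{i-1}$ loses its witness. Here I would abandon the previous instance and argue instead from Lemma~\ref{lem1} and the greedy, tie-by-tie nature of GSDT: since every applicant other than $\applicant$ has the same per-tie counts under $\nu$ as under $\matching^{i-1}$ while $\applicant$ is strictly better off, an exchange argument on $\matching^{i-1}\oplus\nu$ yields a matching that raises $\chi_{\applicant}$ by $e_{t''}$ in some tie $t''$ no less preferred than the first tie in which $\nu$ improves $\applicant$, with all other applicants held fixed. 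By Lemma~\ref{lem1} this is exactly an augmenting path in $N^{i,t''}$ relative to $f^{i-1}$, i.e.\ an improvement of $\applicant$ in tie $t''$ that GSDT would have realised. The crux---and the main obstacle---is to reconcile the index $t''$ with the algorithm's bookkeeping: using that $curr(\applicant)$ advances past a tie only once no further augmenting path exists there, and that the number of courses $\applicant$ receives in each tie never decreases at later steps, one must show that $N^{i,t''}$ cannot in fact admit such a path, contradicting the existence of $\nu$. The earlier steps are essentially lexicographic accounting made available by Lemmas~\ref{lem1} and~\ref{lem2}; it is this matching of $\nu$'s improvement tie to GSDT's monotone, greedy progression that requires the real care.
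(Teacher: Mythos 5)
Your plan follows the paper's own proof almost step for step (induction on $i$, the chain $\nu(\applicant')\weaklypref_{\applicant'}\matching^{i}(\applicant')\weaklypref_{\applicant'}\matching^{i-1}(\applicant')$, trimming a least preferred course, Lemma~\ref{lem2}, and the inductive hypothesis collapsing everything to indifference), but it stops exactly where the theorem's real content begins: the residual case that you flag as ``the hard part'' is not an obstacle to be noted, it \emph{is} the proof, and you leave it open (``one must show\ldots''). Moreover your entry point into it---an exchange argument on $\matching^{i-1}\oplus\nu$---is an unnecessary detour that obscures the fact that no construction is needed at all. Under lexicographic preferences, indifference between course sets is \emph{equality of generalized characteristic vectors}; hence in the residual case your own trimming analysis already gives, exactly, $\chi_{\applicant'}(\nu(\applicant'))=\chi_{\applicant'}(\matching^{i-1}(\applicant'))$ for every $\applicant'\neq\applicant$ and $\chi_{\applicant}(\nu(\applicant))=\chi_{\applicant}(\matching^{i-1}(\applicant))+e_{t_\nu}$, where $t_\nu$ is the tie of the deleted worst course of $\nu(\applicant)$. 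So $\nu$ itself is the witness required by Lemma~\ref{lem1}, and $N^{i,t_\nu}$ \emph{does} admit an augmenting path with respect to $f^{i-1}$.

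What is genuinely missing is the bookkeeping chain $t'\leq t_\nu<t_{\matching}$, where $t'$ is the value of $curr(\applicant)$ at the start of stage $i$ and $t_\matching$ is the tie in which GSDT augmented; this is the heart of the paper's argument and it is short. First, $t_\nu<t_\matching$: both $\chi_{\applicant}(\nu(\applicant))$ and $\chi_{\applicant}(\matching^{i}(\applicant))$ equal $\chi_{\applicant}(\matching^{i-1}(\applicant))$ plus a single unit vector, so $\nu(\applicant)\pref_{\applicant}\matching^{i}(\applicant)$ forces $\nu$'s unit to sit in a strictly more preferred tie. Second, $t'\leq t_\nu$: if GSDT enters the while loop at stage $i$, then every earlier occurrence of $\applicant$ ended in a successful augmentation (a failure leaves $curr(\applicant)$ pointing past every non-empty tie, after which the loop is never entered again), so $t'$ is precisely the least preferred tie containing a course of $\matching^{i-1}(\applicant)$; since $t_\nu$ is the least preferred tie of $\nu(\applicant)$ and $\chi_{\applicant}(\nu(\applicant))=\chi_{\applicant}(\matching^{i-1}(\applicant))+e_{t_\nu}$, we get $t_\nu\geq t'$. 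These two inequalities yield a \emph{within-stage} contradiction: GSDT examined tie $t_\nu$ during stage $i$ before succeeding at $t_\matching$ and found no augmenting path in $N^{i,t_\nu}$ with respect to $f^{i-1}$, contradicting the existence of the path guaranteed above. Your sketch instead blends this with cross-stage monotonicity (``counts never decrease at later steps''), which is not what closes this case and is itself nontrivial to justify; and the complementary case in which GSDT does not augment at stage $i$ (so $\matching^{i}=\matching^{i-1}$ and $|\nu(\applicant)|=b^{i-1}(\applicant)+1$), which you fold into a parenthesis, requires its own run of the same tie-index analysis, as the paper carries out separately.
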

\begin{proof}
We apply induction on $i$. Clearly, $\matching^0=\emptyset$ is the single matching in $I^0$ and hence a POM in $I^0$. We assume that $\matching^{i-1}$ is a POM in $I^{i-1}$ and prove that $\matching^i$ is a POM in $I^i$.

Assume to the contrary that  $\matching^i$ is not a POM in $I^i$. This implies that there is a matching $\xi$ in $I^i$ that dominates $\matching^i$. Then, for all $\applicant \in \applicantset$, $\xi(\applicant) \succeq_{\applicant} \matching^{i}(\applicant) \succeq_{\applicant} \matching^{i-1}(\applicant)$. Recall that the capacities of all applicants in $I^i$ are as in $I^{i-1}$ except for the capacity of $\applicant^i$ that has been increased by $1$. Hence, for all $\applicant \in A \setminus \{\applicant_i\}$, $|\xi(\applicant)|$ does not exceed the capacity of $\applicant$ in instance $I^{i-1}$, namely $b^{i-1}(\applicant)$, while $|\xi(\applicant^i)|$ may exceed $b^{i-1}(\applicant^i)$ by at most $1$.

Moreover, it holds that $|\xi(\applicant^i)| \geq |\matching^{i}(\applicant^i)|$. Assuming to the contrary that $|\xi(\applicant^i)| < |\matching^{i}(\applicant^i)|$ yields that $\xi$ is feasible also in instance $I^{i-1}$. In addition, $|\xi(\applicant^i)| < |\matching^{i}(\applicant^i)|$ implies that it cannot be $\xi(\applicant^i) \simeq_{\applicant^i} \matching^{i}(\applicant^i)$ thus, together with $\xi(\applicant^i) \succeq_{\applicant^i} \matching^{i}(\applicant^i)\succeq_{\applicant^i} \matching^{i-1}(\applicant^i)$, it yields $\xi(\applicant^i) \succ_{\applicant^i} \matching^{i}(\applicant^i) \succeq_{\applicant^i} \matching^{i-1}(\applicant^i)$. But then $\xi$ dominates $\matching^{i-1}$ in $I^{i-1}$, a contradiction to $\matching^{i-1}$ being a POM in $I^{i-1}$.

Let us first examine the case in which GSDT enters the `while' loop and finds an augmenting path, hence $\matching^i$ dominates $\matching^{i-1}$ in $I^i$ only with respect to applicant $\applicant^i$ that receives an additional course. This is one of her worst courses in $\matching^i(\applicant^i)$ denoted as $c_\matching$. Let $c_\xi$ be a worst course for $\applicant^i$ in $\xi(\applicant^i)$. Let also $\xi'$ and $\matching'$ denote $\xi \setminus \{(\applicant^i,c_\xi)\}$ and $\matching^i \setminus \{(\applicant^i,c_\matching)\}$, respectively. Observe that both $\xi'$ and $\matching'$ are feasible in $I^{i-1}$, while having shown that $|\xi(\applicant^i)| \geq |\matching^{i}(\applicant^i)|$ implies through Lemma \ref{lem2} that $\xi'$ weakly dominates $\matching'$ which in turn weakly dominates $\matching^{i-1}$ by Lemma \ref{lem1}. Since $\matching^{i-1}$ is a POM in $I^{i-1}$, $\xi'(\applicant) \simeq_{\applicant} \matching'(\applicant)\simeq_{\applicant} \matching^{i-1}(\applicant)$ for all $\applicant \in \applicantset$, therefore $\xi$ dominates $\matching^i$ only with respect to $\applicant^i$ and $c_\xi \succ_{\applicant^i} c_{\matching}$.  Overall, $\xi(\applicant) \simeq_{\applicant} \matching^i(\applicant)\simeq_{\applicant} \matching^{i-1}(\applicant)$ for all $\applicant \in \applicantset \setminus \{\applicant^i\}$ and $\xi(\applicant^i) \succ_{\applicant^i} \matching^i(\applicant^i)\succ_{\applicant^i} \matching^{i-1}(\applicant^i)$. 

Let $t_\xi$ and $t_\matching$ be the ties of applicant $\applicant^i$ containing $c_\xi$ and $c_\matching$, respectively, where $t_\xi<t_\matching$ because $c_\xi \succ_{\applicant^i} c_{\matching}$. Then, Lemma \ref{lem1} implies that there is a path augmenting $f^{i-1}$ (i.e., the flow corresponding to $\matching^{i-1}$) in the network $N^{i,t_\xi}$. Let also $t'$ be the value of $curr(\applicant^i)$ at the beginning of stage $i$. Since we examine the case where GSDT enters the `while' loop and finds an augmenting path, $\eqclass^{\applicant^i}_{t'} \neq \emptyset$. Thus, $t'$ indexes the least preferred tie from which $\applicant^i$ has a course in $\matching^{i-1}$. The same holds for $\xi'$ since $\xi'(\applicant^i) \simeq_{\applicant^i} \matching^{i-1}(\applicant^i)$. Because $\xi'$ is obtained by removing from $\applicant^i$ its worst course in $\xi(\applicant^i)$, that course must belong to a tie of index no smaller than $t'$, i.e., $t' \leq t_\xi$. This together with $t_\xi<t_\matching$ yield $t' \leq t_\xi <t_\matching$, which implies that GSDT should have obtained $\xi$ instead of $\matching^i$ at stage $i$, a contradiction. 

It remains to examine the cases where, at stage $i$, GSDT does not enter the `while' loop or enters it but finds no augmenting path. For both these cases, $\matching^i=\matching^{i-1}$, thus $\xi$ dominating $\matching^i$ means that $\xi$ is not feasible in $I^{i-1}$ (since it would then also dominate $\matching^{i-1}$). Then, it holds that $|\xi(\applicant^i)|$ exceeds $b^{i-1}(\applicant^i)$ by $1$, thus $|\xi(\applicant^i)| > |\matching^i(\applicant^i)|$ yielding $\xi(\applicant^i) \succ_{\applicant^i} \matching^i(\applicant^i)$. Let $t_\xi$ be defined as above and $t'$ now be the most preferred tie from which $\applicant^i$ has more courses in $\xi$ than in $\matching^i$. Clearly, $t' \leq t_\xi$ since $t_\xi$ indexes the least preferred tie from which $\applicant^i$ has a course in $\xi$. If $t' < t_\xi$, then the matching $\xi'$, defined as above, is feasible in $I^{i-1}$ and dominates $\matching^{i-1}$ because $\xi'(\applicant^i) \succ_{\applicant^i} \matching^{i-1}(\applicant^i)$, a contradiction; the same holds if $t' = t_\xi$ and $\applicant^i$ has in $\xi$ at least two more courses from $t_\xi$ than in $\matching^i$. Otherwise, $t' = t_\xi$ and $\applicant^i$ has in $\xi$  exactly one more course from $t_\xi$ than in $\matching^i$; that, together with $|\xi(\applicant^i)| > |\matching^i(\applicant^i)|$ and the definition of $t_\xi$, implies that 
the index of the least preferred tie  from which $\applicant^i$ has a course in $\matching^{i-1}$ and, therefore, the value of $curr(\applicant^i)$ in the beginning of stage $i$, is at most $t'$. But then GSDT should have obtained $\xi$ instead of $\matching^i$ at stage $i$, a contradiction. 
 
\end{proof}

The following statement is now direct.
\begin{corollary} 
GSDT produces a POM for instance $I$.
\end{corollary}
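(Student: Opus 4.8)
The final statement is the Corollary asserting that GSDT produces a POM for the original instance $I$. The plan is to derive this immediately from Theorem~\ref{thm:alg1-pom}, which has already done the substantive inductive work. First I would recall that the multisequence $\copiesorder=(\applicant^1,\ldots,\applicant^B)$ contains each applicant $\applicant$ exactly $b(\applicant)$ times, so the number of stages is $B=\sum_{\applicant\in\applicantset} b(\applicant)$, and after the final stage the capacity vector $b^B$ coincides with the true quota vector $b$. Consequently the instance $I^B$ processed in the last stage is identical to the original instance $I$: the applicant and course sets, the preference profile, the course quotas, and now the applicant quotas all agree.

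Next I would invoke Theorem~\ref{thm:alg1-pom} with $i=B$. That theorem states that for every $i$ the matching $\matching^i$ returned at stage $i$ is a POM for the associated instance $I^i$. Applying it at $i=B$ yields that $\matching^B$ is a POM for $I^B$. Since $I^B=I$ as just noted, $\matching^B$ is a POM for $I$. Because Algorithm~\ref{alg1} returns precisely $\matching^B$, the matching produced by GSDT on input $I$ is Pareto optimal in $I$, which is exactly the claim.

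The only point meriting a line of justification is the identification $I^B=I$, i.e.\ that after $B$ stages every applicant's working capacity has been raised to her true quota. This follows from the construction: each stage increases the active capacity of a single applicant by one, the stages are indexed by the entries of $\copiesorder$, and $\applicant$ occurs $b(\applicant)$ times in $\copiesorder$, so over the whole run $\applicant$'s capacity is incremented exactly $b(\applicant)$ times from its initial value $0$, giving $b^B(\applicant)=b(\applicant)$ for all $\applicant$. I would also note that GSDT is guaranteed to terminate after these $B$ stages, so $\matching^B$ is well defined. There is no genuine obstacle here; the corollary is a one-line specialization of the theorem, and the whole difficulty of the argument resides in Theorem~\ref{thm:alg1-pom} (and the supporting Lemmas~\ref{lem1} and~\ref{lem2}), which maintain the POM invariant through each capacity increment.
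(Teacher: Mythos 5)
Your proof is correct and matches the paper exactly: the paper states the corollary is immediate from Theorem~\ref{thm:alg1-pom}, and your argument simply spells out the identification $I^B=I$ (since $b^B=b$) and applies the theorem at $i=B$ to the returned matching $\matching^B$.
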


To derive the complexity bound for GSDT, let us denote by $L$ the length of the preference profile in $I$, i.e., the total number of courses in the preference lists of all applicants. Notice that $|E_3|=L$ and neither the size of any matching in $I$ nor the total number of ties in all preference lists exceeds $L$.

Within one stage, several searches in the network might be needed to find a tie of the active applicant for which the current flow can be augmented. However, one tie is unsuccessfully explored at most once, hence each search either augments the flow thus adding a pair to the current matching or moves to the next tie. So the total number of searches performed by the algorithm is bounded by the size of the obtained matching plus the number of ties in the preference profile, i.e., it is $O(L)$. A search requires a number of steps that remains linear in the number of arcs in the current network (i.e., $N^{i,curr(a^i)}$), but as at most one arc per $E_1, E_2$ and $E_4$ is used, any search needs $O(|E_3|)=O(L)$ steps. This leads to a complexity bound $O(L^2)$ for GSDT.

The next theorem will come in handy when implementing Algorithm~\ref{alg1}, as it implies that for each applicant $\applicant$ only one node in $T$ corresponding to $\applicant$ has to be maintained at a time.
\begin{theorem}
Let $N^{i,t}$ be the network at stage $i$ while tie $t$ of applicant $\applicant^i$ is examined. Then, there is no augmenting path with respect to $f^{i-1}$ in $N^{i,t}$ that has an arc of the form $((\applicant^j,\ell),\course)$ where $j\leq i$ and $\ell<curr(\applicant^j)$.
\end{theorem}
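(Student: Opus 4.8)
The plan is to reduce the statement to a reachability invariant in the residual network and then prove that invariant by induction over the execution of GSDT. I work throughout with the residual graph of $f^{i-1}$, and I lean on the observation preceding Algorithm~\ref{alg1}: every arc of $E_2$ is saturated except the single arc $(\applicant^i,(\applicant^i,t))$, so any augmenting $\sigma$–$\tau$ path in $N^{i,t}$ must begin $\sigma\to\applicant^i\to(\applicant^i,t)$, use this one $E_2$ arc forward and no other, and then alternate between backward and forward $E_3$ arcs through tie– and course–nodes before closing with a forward $E_4$ arc into $\tau$. Thus the only tie node entered ``from its applicant side'' is $(\applicant^i,t)$, whose index is exactly $curr(\applicant^i)$; every other tie node $(\applicant^j,\ell)$ met by the path is entered along a backward $E_3$ arc and left along a forward $E_3$ arc. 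Since the unique open $E_2$ arc is consumed at the very start, the suffix of the path after $(\applicant^i,t)$ lives entirely in the sub-network spanned by $E_3$ and $E_4$.

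I would then formulate the invariant: \emph{for every applicant $\applicant$ and every tie $\ell<curr(\applicant)$, the tie node $(\applicant,\ell)$ cannot reach $\tau$ in the residual graph restricted to $E_3$ and $E_4$ arcs, with respect to the current flow.} Granting it, the theorem is immediate. If some augmenting path used an arc $((\applicant^j,\ell),\course)$ with $\ell<curr(\applicant^j)$, then by the structural description it visits the tie node $(\applicant^j,\ell)$ as an intermediate node (it cannot be the start $(\applicant^i,t)$, where $\ell=t=curr(\applicant^i)$); the suffix of the path from $(\applicant^j,\ell)$ to $\tau$ is then precisely an $E_3$/$E_4$ residual path, contradicting the invariant.

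To establish the invariant, I would induct over the elementary events of the algorithm. A tie enters the range $\ell<curr(\applicant)$ only when the search on tie $\ell$ of $\applicant$ fails and $curr(\applicant)$ is incremented. At that moment no augmenting path exists in the network; since the arcs $(\sigma,\applicant)$ and $(\applicant,(\applicant,\ell))$ have just been opened, any $E_3$/$E_4$ residual path from $(\applicant,\ell)$ to $\tau$ could be prefixed by $\sigma\to\applicant\to(\applicant,\ell)$ to yield one (cf.\ Lemma~\ref{lem1}), so no such residual path exists. This is the base case. A failed search leaves the flow unchanged, so previously secured instances of the invariant persist.

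The crux is preservation under a flow augmentation, and this is the step I expect to be the main obstacle: an augmentation both destroys and creates residual arcs, so it is not a priori clear that it cannot open a new route from a ``retired'' tie node to the sink. The resolution is a cut argument. Suppose the invariant holds for $(\applicant,\ell)$ with respect to $f$; let $Z$ be the set of nodes reachable from $(\applicant,\ell)$ in the $E_3$/$E_4$ residual graph of $f$, so $\tau\notin Z$ and no $E_3$/$E_4$ residual arc leaves $Z$. Let $P$ be the path along which GSDT augments; its $E_3$/$E_4$ part is a residual path from the active tie node to $\tau$. Were it to meet $Z$, then, having to leave $Z$ again to reach $\tau\in\overline Z$, it would traverse an $E_3$/$E_4$ residual arc out of $Z$, which is impossible; hence $P$ is disjoint from $Z$. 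Augmenting therefore alters only $E_3$/$E_4$ arcs with both endpoints in $\overline Z$ (each newly created reverse arc also lies in $\overline Z$), so the arcs incident to $Z$ are untouched and none leaving $Z$ is created. The reachable set from $(\applicant,\ell)$ is thus still $Z$ and still omits $\tau$, and since augmentations change no $curr$ value, the invariant survives. Two minor points the sketch uses implicitly would need a line of verification: that the single open $E_2$ arc is consumed at the start of an augmenting path, so its suffix is genuinely confined to $E_3\cup E_4$; and that detours upward through saturated $E_2$ arcs or back through $\sigma$ are dead ends, which is what legitimises restricting the invariant to the $E_3$/$E_4$ residual graph.
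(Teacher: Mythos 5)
Your proof is correct, and it takes a genuinely different route from the paper's. The paper argues at the level of matchings: it turns a hypothetical augmenting path containing an arc $((\applicant^j,\ell),\course)$ with $\ell<curr(\applicant^j)$ into an improving coalition with respect to $\matching^{i-1}$ --- an alternating path coalition if $\applicant^j$ still holds a course from tie $curr(\applicant^j)$ (pulled back from $\matching^i$ to $\matching^{i-1}$ via Lemma~\ref{lem1}), an augmenting path coalition if $\applicant^j$ is exposed, after first ruling out $\applicant^j=\applicant^i$ --- and thereby contradicts the Pareto optimality of the intermediate matchings guaranteed by Theorem~\ref{thm:alg1-pom}. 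You never invoke Pareto optimality or the coalition machinery: you isolate a purely flow-theoretic invariant (a retired tie node $(\applicant,\ell)$, $\ell<curr(\applicant)$, can never reach $\tau$ in the residual graph restricted to $E_3\cup E_4$), establish it at the moment of retirement (a failed search means no augmenting path exists, and prefixing an $E_3$/$E_4$ residual path by the two just-opened arcs $(\sigma,\applicant)$ and $(\applicant,(\applicant,\ell))$ would produce one), and preserve it under later augmentations by a cut argument: the augmenting path cannot enter the reachable set $Z$ of a retired tie node, since it would then have to exit $Z$ along a residual arc that by definition does not exist, so the augmentation touches no arc incident to $Z$ and creates none leaving it. Both arguments rest on the same structural observation recorded after Algorithm~\ref{alg1} (every augmenting path uses exactly one arc from each of $E_1$, $E_2$, $E_4$ and all others from $E_3$). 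The paper's route buys brevity, since Theorem~\ref{thm:alg1-pom} and the coalition framework are already in place; your route buys self-containedness and a slightly stronger conclusion: the invariant excludes such paths in every network $N^{i,t}$ at once, without the paper's device of assuming the offending path is the one actually ``found and used'' to form $\matching^i$, and it directly justifies the implementation remark the theorem is meant to support, namely that only the current tie node of each applicant need be maintained. The two points you flag as needing a line of verification are indeed routine and are essentially the paper's own observation.
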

\begin{proof}
Assume otherwise. Let $\path$ be such an augmenting path that is found in round $i$ and used to obtain $\matching^i$. Hence $\path$ corresponds to an augmenting path coalition $\cpath$ of the following form $\langle\applicant^i,\course_{r},\ldots, \course_s, \applicant^j, \course_q, \ldots \rangle$ where $\course_r$ is in the $t$'th indifference class of $\applicant^i$ and both $\course_s$ and $\course_q$ are in the $\ell$'th indifference class of $\applicant^j$, $\ell<curr(\applicant^j)$. Note that as $\ell\geq 1$ thus $curr(\applicant^j) > 1$. It then follows from the description of Algorithm~\ref{alg1} that either $\applicant^j$ is matched to at least one course in $curr(\applicant^j)$ under $\matching^i$ or she is exposed (which would be the case when $curr(\applicant^j) > \numcourses$ or $\eqclass^{\applicant^j}_{curr(\applicant^j)} =\emptyset$).

We first show that $\applicant^i$ and $\applicant^j$ are not the same applicant. Otherwise, $\cpath' = \langle\applicant^j, \course_q, \ldots \rangle$---obtained from $\cpath$ by discarding all courses and applicants that appear before $\applicant^j$---is an augmenting path coalition w.r.t. $\matching^{i-1}$ in $I^i$. Clearly the matching obtained from $\matching^{i-1}$ by satisfying $\cpath'$ Pareto dominates $\matching^i$, as $\ell < t$, contradicting that $\matching^i$ is a POM in $I^{i}$. 

In the remainder of the proof we assume that $\applicant^i \neq \applicant^j$.
%
Let us first consider the case where $\applicant^j$ is matched to a course in $\eqclass^{\applicant^j}_{curr(\applicant^j)}$ under $\matching^i$, and therefore by Lemma \ref{lem1} to a course $\course\in \eqclass^{\applicant^j}_{curr(\applicant^j)}$ under $\matching^{i-1}$. 
Let $\cpath' = \langle\course,\applicant^j,\course_q,\dots\rangle$, i.e., $\cpath'$ is obtained from $\cpath$ by discarding all courses and applicants that appear before $\applicant^j$ and replacing them with $\course$. Since $\ell < curr(\applicant^j)$, we have $\course_q \pref_{\applicant^j} \course$. 
It is then easy to see that $\cpath'$ is an alternating path coalition w.r.t. $\matching^{i-1}$ in $I^{i-1}$, and hence $\matching^{i-1}$ is not a POM in $I^{i-1}$, a contradiction. 
We now consider the case where $\applicant^j$ is exposed in $\matching^i$, and hence exposed in $\matching^{i-1}$ w.r.t. $I^{i-1}$. Let $\cpath' = \langle\applicant^j,\course_q,\dots\rangle$, i.e., $\cpath'$ is obtained from $\cpath$ by discarding all courses and applicants that appear before $\applicant^j$. It is clear that $\cpath'$ is an augmenting path coalition w.r.t. $\matching^{i-1}$ in $I^{i-1}$, contradicting that $\matching'$ is a POM in $I^{i-1}$.
%
\end{proof}



Next we show that GSDT can produce any POM. Our proof makes use of a subgraph of the extended envy graph of Definition \ref{EEG}. 

\begin{theorem}\label{any-pom}
Given a CA instance $\inputsetting$ and a POM $\matching$, there exists a suitable priority ordering over applicants $\copiesorder$ given which GSDT can produce $\matching$.
\end{theorem}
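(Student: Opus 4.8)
The plan is to read off a suitable priority multisequence $\copiesorder$ directly from the structure of $\matching$, exploiting the fact that, by Theorem~\ref{eeg-pom}, the extended envy graph $G(\matching)$ has no negative cost cycle. I would work with the subgraph $H$ of $G(\matching)$ induced on the matched-pair nodes $V_\matching=\{\applicant\course:(\applicant,\course)\in\matching\}$, keeping only the arcs of $E_{G(\matching)}^{3}$ between two such nodes. On top of these I impose, for each applicant separately, the within-applicant precedence constraints coming from her own preference order: if $\course,\course'\in\matching(\applicant)$ with $\course\succ_\applicant\course'$, then the pair $(\applicant,\course)$ must be served before $(\applicant,\course')$. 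Reading each weight-$(-1)$ arc of $H$ as a strict precedence constraint (serve the strictly envied pair first) and ignoring the weight-$0$ arcs (indifferent envy imposes no constraint, and ties are broken arbitrarily), the goal is to produce a linear extension of the union of these two families of strict constraints. Listing the applicant of each pair in the resulting order then yields $\copiesorder$, in which each applicant appears exactly $b(\applicant)$ times with her copies ordered best-first.

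The first key step is to prove that this combined strict precedence relation is acyclic, and here Lemma~\ref{lem:repetition} does the work. Suppose there were a cycle $(\applicant_1,\course_1)\prec\cdots\prec(\applicant_k,\course_k)\prec(\applicant_1,\course_1)$. Every constraint, whether a strict envy arc or a within-applicant arc, has the common form $\course_j\succ_{\applicant_{j+1}}\course_{j+1}$ (for a within-applicant step $\applicant_{j+1}=\applicant_j$; for an envy step $\applicant_{j+1}$ is the envier, and additionally $\course_j\notin\matching(\applicant_{j+1})$). Contracting each maximal run of consecutive steps that belong to a single applicant, using transitivity of $\succ_\applicant$ (which also rules out an all-within-applicant cycle), leaves a cyclic sequence in which consecutive pairs belong to distinct applicants and each applicant holds the course of her own pair while strictly preferring the course of the preceding pair, a course she does not already hold. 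This is exactly a cyclic \emph{pseudocoalition}, so by Lemma~\ref{lem:repetition} the matching $\matching$ would admit an improving coalition, contradicting Theorem~\ref{thm:characterization}. Hence a linear extension exists and $\copiesorder$ is well defined.

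Finally I would verify that GSDT, run with this $\copiesorder$ and with its augmenting-path choices guided by $\matching$, outputs exactly $\matching$. The statement to prove by induction on the stage $i$ is that $\matching^i=\matching^i_*$, where $\matching^i_*$ denotes the set of the first $i$ pairs of $\matching$ in the serve-order. In the inductive step, when the copy of $\applicant=\applicant^i$ targeting its pair $(\applicant,\course)$ with $\course\in\eqclass^\applicant_t$ is processed, the within-applicant ordering guarantees that $\applicant$ has already received in $\matching^{i-1}_*$ precisely her $\matching$-courses from every tie below $t$; the target course $\course$ still has residual capacity because $|\matching^{i-1}_*(\course)|\le|\matching(\course)|-1<q(\course)$, so the direct path $\sigma\to\applicant\to(\applicant,t)\to\course\to\tau$ is augmenting and GSDT can augment along it, adding exactly $(\applicant,\course)$ and producing $\matching^i_*$. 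The delicate point, which I expect to be the main obstacle, is to show that the active tie $curr(\applicant)$ has in fact advanced to $t$ and that GSDT is not first forced to augment $\applicant$ in some earlier tie $t'<t$: a premature augmenting path in $N^{i,t'}$ with respect to $f^{i-1}$ would, by Lemma~\ref{lem1}, give $\applicant$ an extra course in tie $t'$ while fixing every other characteristic vector, i.e.\ an augmenting-path coalition with respect to $\matching^{i-1}_*$. I would rule this out by lifting it to an improving coalition for $\matching$ itself---since $\applicant$ is full in $\matching$ she would trade a course from her worst occupied tie (of index $\ge t>t'$) for the tie-$t'$ course, turning it into an alternating-path coalition---invoking Lemma~\ref{lem:repetition} to absorb repeated vertices and re-route the tail to a course left exposed under $\matching$, thereby contradicting the Pareto optimality of $\matching$ through Theorem~\ref{thm:characterization}. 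With premature augmentations excluded, $curr(\applicant)=t$ at stage $i$, the induction carries through, and $\matching^B=\matching$.
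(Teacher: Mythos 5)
Your construction of $\copiesorder$ has a genuine gap, and it is precisely at the point you flag as delicate: you discard the weight-$0$ (indifference) arcs, but these impose real constraints, because GSDT's forced augmentations cascade through indifference classes. Concretely, take $\applicantset=\{\applicant_1,\applicant_2,\applicant_3\}$, $\courseset=\{\course_0,\course_1,\course_2\}$, all quotas equal to $1$, with $\preflist(\applicant_1)=(\{\course_1\},\{\course_0\})$, $\preflist(\applicant_2)=(\{\course_1,\course_2\})$, $\preflist(\applicant_3)=(\{\course_2\})$, and the POM $\matching=\{(\applicant_1,\course_0),(\applicant_2,\course_1),(\applicant_3,\course_2)\}$. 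Your only precedence constraint is the strict-envy one, $(\applicant_2,\course_1)$ before $(\applicant_1,\course_0)$ (there are no within-applicant constraints), so the serve-order $\langle(\applicant_2,\course_1),(\applicant_1,\course_0),(\applicant_3,\course_2)\rangle$, i.e.\ $\copiesorder=(\applicant_2,\applicant_1,\applicant_3)$, is admissible for you. But running GSDT on it: after $\applicant_2$ takes $\course_1$, at stage $2$ the network $N^{2,1}$ \emph{does} admit an augmenting path $\sigma\to\applicant_1\to(\applicant_1,1)\to\course_1\to(\applicant_2,1)\to\course_2\to\tau$, and GSDT is forced to use it (the while loop only advances $curr(\applicant_1)$ when no path exists), producing $\{(\applicant_1,\course_1),(\applicant_2,\course_2)\}$; then $\applicant_3$ gets nothing. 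No choice available to GSDT under this $\copiesorder$ yields $\matching$. The constraint you are missing is exactly the weight-$0$ arc from $\applicant_2\course_1$ to $\applicant_3\course_2$, which forces $(\applicant_3,\course_2)$ to be served \emph{first}; the paper obtains it by taking a reversed topological order of the SCC condensation of the envy graph, which respects all arcs, with the weight-$0$ cycles absorbed inside SCCs (you cannot simply declare weight-$0$ arcs to be strict precedences, since they can be mutual).

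The same example shows why your proposed repair of the inductive step cannot work. The premature augmenting path above ends at $\course_2$, which is exposed with respect to $\matching^{1}_*$ but \emph{full} in $\matching$; since $\matching$ is a POM, there is no improving coalition for $\matching$ at all, so no re-routing of the tail ``to a course left exposed under $\matching$'' can exist, and Lemma~\ref{lem:repetition} has nothing to absorb. (Your step also assumes $\applicant$ is full in $\matching$, which need not hold.) The paper's proof does not contradict Pareto optimality in this situation; it contradicts the \emph{serve order}: the terminal course, being full in $\matching$ but not in $\matching^{i-1}$, is matched in $\matching$ to a pair served later, while the arcs along the path (including the weight-$0$ ones) place that pair in an SCC which the reversed topological order must put before the current pair. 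That argument is only available because the ordering respects indifference arcs, which yours does not. On the positive side, your explicit within-applicant best-first constraints are sound and genuinely needed (the paper leaves this point implicit), and your acyclicity argument via cyclic pseudocoalitions and Lemma~\ref{lem:repetition} is correct as far as it goes; but the ordering built from strict envy alone does not prove the theorem.
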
 
\begin{proof}
Given an instance $\inputsetting$ and a POM $\matching$, let $G=(V,E)$ be a digraph such that $V=\{\applicant\course: (\applicant,\course) \in \matching\}$ 
and there is an arc from $\applicant\course$ to $\applicant'\course'$ if $\applicant\neq \applicant'$, $\course'\notin\matching(\applicant)$ and $\course' \weaklypref_{\applicant} \course$. An arc $(\applicant\course,\applicant'\course')$ has weight $-1$ if $\applicant$ prefers $\course'$ to $\course$ and has weight $0$ if she is indifferent between the two courses. Note that $G$ is a subgraph of the extended envy graph $G(\matching)$ introduced in Definition \ref{EEG}. We say that $\applicant\course$ envies $\applicant'\course'$ if $(\applicant\course,\applicant'\course')$ has weight $-1$.

Note that if there exists an applicant-course pair $\applicant\course\in\matching$ and a course $\course'$ such that $\course' \pref_{\applicant} \course$ and $\applicant\course'\notin \matching$, then $\course'$ must be full under $\matching$, or else $\matching$ admits an alternating path coalition and is not a POM.

Moreover, all arcs in any given strongly connected component (SCC) of $G$ have weight $0$. To see this, note that by the definition of a SCC, there is a path from any node to every other node in the component. Hence, if there is an arc in a SCC with weight $-1$, then there must be a cycle of negative weight in that SCC. It is then straightforward to see that, by Lemma~\ref{lem:repetition}, $\matching$ admits a cyclic coalition, a contradiction. 

It then follows that if $c' \pref_{a} c$, then $ac'$ and $ac$ cannot belong to the same SCC. Should that occur, there would be an arc (of weight $0$) from $ac'$ to some vertex $a'c^*$ in the same SCC, implying that $c^* \tie_{a} c'$ and thus $c^* \pref_{a} c$. This in turn would yield that there is an arc of weight $-1$ from $ac$ to $a'c^*$ in this SCC, a contradiction.


We create the condensation $G'$ of the graph $G$. It follows from the definition of SCCs that $G'$ is a DAG. Hence $G'$ admits a topological ordering. Let $X'$ be a reversed topological ordering of $G'$ and $X$ be an ordering of the vertices in $G$ that is consistent with $X'$ (the vertices within one SCC may be ordered arbitrarily). Let $\copiesorder$ be an ordering over applicant copies that is consistent with $X$ (we can think of it as obtained from $X$ by removing the courses). We show that GSDT can produce $\matching$ given $\copiesorder$. Note that technically $\copiesorder$ must contain $b(\applicant)$ copies of each applicant $\applicant$. However, as $\matching$ is Pareto optimal, upon obtaining $\matching$ the algorithm will not be able to allocate any more courses to any of the applicants, even if we append more applicant copies to the end of $\copiesorder$.

We continue the proof with an induction. Recall that $X$ is an ordering over the matched pairs in $\matching$. We let $X(\applicant^i) = \course$ where $\applicant^i\course$ is the $i$th element of $X$. Let $\matching^i$ denote the matching that corresponds to the first $i$ elements of $X$; hence $\matching^{|\matching|} = \matching$. We claim that given $\copiesorder$, GSDT is able to produce $\matching^i$ at stage $i$, after augmenting $f^{i-1}$ through an appropriate augmenting path. 

For the base case, note that $\applicant^1X(\applicant^1)$ does not envy any other vertex in $G$ and hence it can only be that $X(\applicant^1) \in \eqclass^1_{\applicant^1}$. It is then easy to see that the path  $\langle \sigma, \applicant^1, (\applicant^1,1), X(\applicant^1), \tau\rangle$ is a valid augmenting path in $N^{1,1}$ and hence GSDT might choose it.

Assume that GSDT produces $\matching^{\ell}$ at the end of each stage $\ell$ for all $\ell<i$. We prove that it can produce $\matching^i$ at the end of stage $i$. Assume, for a contradiction, that this is not the case. Let $r$ denote the indifference class of $\applicant^i$ to which $X(\applicant^i)$ belongs. Note that course $X(\applicant^i)$ is not full in $I^i$, so if the path $\langle \sigma, \applicant^i, (\applicant^i,r), X(\applicant^i), \tau\rangle$ is not chosen by GSDT, it must be the case that GSDT finds an augmenting path in $N^{i,t}$ for some $t<r$. Let $\cpath$ denote the corresponding augmenting path coalition with respect to the matching $\matching^{i-1}$, which would be of the form
\begin{equation*}
\applicant^i,\course_{j_1},\applicant^{i_1},\ldots,\course_{i_{y-1}},\applicant^{i_{y-1}},\course_{j_y}
\end{equation*}
where $\course_{j_1} \in \eqclass^{t}_{\applicant_i}$ and $\course_{j_y}$ is exposed in $\matching^{i-1}$. 
It follows, from the definition of an augmenting path, that there is an edge $(a^{i_k}c_{j_k}, a^{i_{k+1}}c_{j_{k+1}})$ in $G$ for all $k$, $1\leq k\leq y-2$. Furthermore, as $c_{j_1} \pref_{a^i} \matching(a^i)$, there is an edge of weight $-1$ in $G$ from $\applicant^iX(\applicant^i)$ to $\applicant^{i_1}\course_{j_1}$; therefore $\applicant^{i_1}\course_{j_1}$ belongs to a SCC of higher priority than the one to which $\applicant^iX(\applicant^i)$ belongs.
If $\course_{j_y}$ is exposed in $\matching$, then $\cpath'$ that is obtained by adding $X(\applicant^i)$ to the beginning of $\cpath$ is an alternating path coalition in $\matching$, a contradiction to $\matching$ being a POM. Therefore $\course_{j_y}$ must be full in $\matching$. 

As $\course_{j_y}$ is not full in  $\matching^{i-1}$ and full in $\matching$, there must exist an $\applicant^z$, $z>i$, such that $(\applicant^z,\course_{j_y}) \in \matching$. It follows, from the augmenting path coalition $\cpath$, that there is a path in $G$ from $\applicant^{i_1}\course_{j_1}$ to $\applicant^z\course_{j_y}$. If there is also a path from $\applicant^z\course_{j_y}$ to $\applicant^{i_1}\course_{j_1}$, then the two vertices belong to the same SCC; as $\applicant^{i_1}\course_{j_1}$ belongs to a SCC of higher priority than the one to which $\applicant^iX(\applicant^i)$ belongs,
so must $\applicant^z\course_{j_y}$, implying that $\applicant^z\course_{j_y}$ must have appeared before $\applicant^iX(\applicant^i)$ in $X$, a contradiction to $z>i$. If there is no such a path, then $\applicant^z\course_{j_y}$ belongs to a SCC that is prioritized even over the SCC to which $\applicant^{i_1}\course_{j_1}$ belongs, and hence must have appeared before $\applicant^iX(\applicant^i)$ in $X$, a contradiction to $z>i$. 
\end{proof}
%
%
\section{Truthfulness of mechanisms for finding POMs}\label{truthfulness}
It is well-known that the SDM for HA is truthful, regardless of the given priority ordering over applicants. We will show shortly that GSDT is not necessarily truthful, but first prove that this property does hold for some priority orderings over applicants.


\begin{theorem}\label{thm:alg1-truth}
GSDT is truthful given $\copiesorder$ if, for each applicant $\applicant$, all occurrences of $\applicant$ in $\copiesorder$ are consecutive.
\end{theorem}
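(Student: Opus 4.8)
The plan is to localize the whole question to a single applicant $\applicant$ and her block of copies in $\copiesorder$, and then to show that truthful reporting already extracts from GSDT a $\weaklypref_\applicant$-best attainable bundle, so no declaration can beat it. Fix $\applicant$ with true list $\preflist(\applicant)$; let $S^\ast$ be the set GSDT gives her under truthful reporting and, for an arbitrary declared list $\preflist'(\applicant)$, let $S'$ be the set it gives her. Since courses outside $\applicant$'s true acceptable set contribute nothing to her true characteristic vector and can be discarded without destroying feasibility, I may assume $S'$ is truly acceptable. Suppose for contradiction that $S'\pref_\applicant S^\ast$, and let $\eqclass^\applicant_{\ell^\ast}$ be the first tie in which $S'$ beats $S^\ast$, so that $\chi_\applicant(S')_t=\chi_\applicant(S^\ast)_t=:v_t$ for $t<\ell^\ast$ while $\chi_\applicant(S')_{\ell^\ast}>v_{\ell^\ast}$.

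The first step is to pin down the truthful run exactly when it stops drawing from $\eqclass^\applicant_{\ell^\ast}$. Because all copies of $\applicant$ are consecutive, $curr(\applicant)$ increases monotonically through one uninterrupted block of stages; GSDT therefore exhausts tie $\ell^\ast$ before ever touching tie $\ell^\ast+1$, and $S^\ast$ holds exactly $v_{\ell^\ast}$ courses from $\eqclass^\applicant_{\ell^\ast}$. The quota cannot be the reason it stopped: if $\sum_{t\le\ell^\ast}v_t=b(\applicant)$, then $|S'|\ge\sum_{t<\ell^\ast}v_t+\chi_\applicant(S')_{\ell^\ast}>b(\applicant)$, contradicting $|S'|\le b(\applicant)$. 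Hence there is a stage $i$ with $curr(\applicant)=\ell^\ast$ at which GSDT increments $w(\sigma,\applicant)$, raises the capacity of $(\applicant,(\applicant,\ell^\ast))$ to $v_{\ell^\ast}+1$, searches $N^{i,\ell^\ast}$ for an augmenting path with respect to $f^{i-1}$, and finds none. At that moment $f^{i-1}$ routes exactly $\sum_{t\le\ell^\ast}v_t$ units into $\applicant$; the per-tie capacities of $\applicant$ are $v_t$ for $t<\ell^\ast$, $v_{\ell^\ast}+1$ for $t=\ell^\ast$, and $0$ for $t>\ell^\ast$; and $w(\sigma,\applicant)=\sum_{t\le\ell^\ast}v_t+1$, since consecutiveness forces every preceding stage of the block to have added exactly one course. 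Every earlier applicant has source capacity $b(\applicant')$ and every later applicant source capacity $0$.

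The crux is to exhibit, in this very network, a feasible flow carrying strictly more than $f^{i-1}$, which contradicts the failed search. I take $\nu$ to be the matching produced by the misreporting run at the end of $\applicant$'s block; as $\copiesorder$ is unchanged, $\applicant$'s block is the same contiguous segment, so in $\nu$ every later applicant is unmatched, every earlier applicant is matched to at most $b(\applicant')$ acceptable courses, and $\applicant$ receives $S'$. From $\nu$ I build $\tilde g$ by deleting $\applicant$'s courses in true ties $\ell^\ast+1,\ell^\ast+2,\dots$ and all but $v_{\ell^\ast}+1$ of her courses in $\eqclass^\applicant_{\ell^\ast}$, routing her remaining courses through her true tie-nodes. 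Then $\tilde g$ respects every capacity of $N^{i,\ell^\ast}$: course capacities hold because $\tilde g$ assigns no more than $\nu$; other applicants' source capacities hold because deleting $\applicant$'s courses leaves them untouched; and $\applicant$'s per-tie and source capacities hold by the choice of $\ell^\ast$ (she keeps $v_t$ courses in each tie $t<\ell^\ast$) and by the trimming. Thus $\tilde g$ is feasible with value $\sum_{t\le\ell^\ast}v_t+1>\mathrm{val}(f^{i-1})$, so $f^{i-1}$ is not maximum and an augmenting path does exist in $N^{i,\ell^\ast}$ --- the desired contradiction. (By Lemma~\ref{lem1}, such a flow equivalently witnesses a matching giving $\applicant$ one more course from $\eqclass^\applicant_{\ell^\ast}$ while fixing every other applicant's characteristic vector.)

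I expect the main obstacle to be precisely this transplanting of a matching computed in the misreporting run into the truthful run's network and certifying its feasibility there. What makes it work is the consecutiveness hypothesis, which yields the clean prefix structure --- all earlier applicants fully activated, all later ones dormant throughout $\applicant$'s block --- so that $\nu$'s load profile fits the truthful network's capacities; the trimming to $v_{\ell^\ast}+1$ is what aligns it with the single incremented tie arc. I would be careful over the two bookkeeping facts the argument leans on: that $curr(\applicant)$ passes through $\eqclass^\applicant_{\ell^\ast}$ exactly once with $v_{\ell^\ast}$ its final count, and that at the failing stage $w(\sigma,\applicant)=\sum_{t\le\ell^\ast}v_t+1$. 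Both the prefix structure and this accounting collapse when the copies of $\applicant$ are interleaved with other applicants, which is exactly why the statement is confined to orderings in which each applicant's copies are consecutive.
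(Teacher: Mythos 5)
Your reduction to a single failed search in the truthful run is sound: the accounting showing that at some stage $i$ the network $N^{i,\ell^\ast}$ has $w(\sigma,\applicant)=\sum_{t\le\ell^\ast}v_t+1$, capacity $v_{\ell^\ast}+1$ on the arc $(\applicant,(\applicant,\ell^\ast))$, and no augmenting path with respect to $f^{i-1}$, is correct. The gap is in the feasibility certificate for the transplanted flow $\tilde g$, and it sits exactly where the theorem's difficulty lies. You check course capacities, the other applicants' \emph{source} arcs, and $\applicant$'s own arcs, but you never check the other applicants' arcs in $E_2$. In $N^{i,\ell^\ast}$ the arc $(\applicant',(\applicant',t'))$ of an applicant $\applicant'$ processed before $\applicant$ does \emph{not} have capacity $b(\applicant')$: its capacity equals the number of courses that $\applicant'$ received from her tie $t'$ in the \emph{truthful} run. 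So routing $\nu(\applicant')$ through $\applicant'$'s tie nodes is feasible only if $|\nu(\applicant')\cap\eqclass^{\applicant'}_{t'}|\le|\matching^{i-1}(\applicant')\cap\eqclass^{\applicant'}_{t'}|$ for every tie $t'$, i.e.\ only if the misreporting run hands every earlier applicant a per-tie distribution dominated by the truthful run's. The facts you do extract (``matched to at most $b(\applicant')$ acceptable courses'') are far weaker than this. The same missing fact resurfaces in your value comparison: $\mathrm{val}(\tilde g)$ is not $\sum_{t\le\ell^\ast}v_t+1$, since both flows also carry all the earlier applicants' courses; to get $\mathrm{val}(\tilde g)>\mathrm{val}(f^{i-1})$ you additionally need $\sum_{\applicant'}|\nu(\applicant')|\ge\sum_{\applicant'}|\matching^{i-1}(\applicant')|$ over those applicants.

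What closes the hole is a cross-run invariance: for every applicant $\applicant'$ whose block precedes $\applicant$'s, $\chi_{\applicant'}(\nu(\applicant'))=\chi_{\applicant'}(\matching^{i-1}(\applicant'))$. This is true, but it compares two different executions on two different instances, so it needs an argument: before $\applicant$'s block, all of $\applicant$'s arcs have capacity $0$, hence by Lemma~\ref{lem1} the success or failure of each tie-search---and therefore the characteristic-vector profile reached at every stage---depends only on the profile so far and not on $\applicant$'s declared list nor on which augmenting paths were chosen, so both runs arrive at $\applicant$'s block with identical characteristic vectors for the earlier applicants; and during $\applicant$'s block every augmentation preserves the other applicants' vectors (Lemma~\ref{lem1} again). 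This invariance is precisely the centerpiece of the paper's proof, which, having it, finishes more cheaply than you do: at the first stage $j$ of the block where misreporting makes $\applicant$ strictly better off, all other applicants are indifferent between $\xi^j$ and $\matching^j$, so $\xi^j$ Pareto dominates $\matching^j$ in $I^j$, contradicting Theorem~\ref{thm:alg1-pom}. Your max-flow ending would work as a legitimate, more hands-on alternative once the invariance is supplied; as written, however, the proposal silently assumes it, and that assumption is the substance of the theorem.
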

\begin{proof} 
Without loss of generality, let the applicants appear in $\copiesorder$ in the following order:
$$\underbrace{a_1,a_1,\dots,a_1}_\text{$b(a_1)$-times},\underbrace{a_2,a_2,\dots,a_2}_\text{$b(a_2)$-times}\dots,
\underbrace{a_{i-1},a_{i-1},\dots,a_{i-1}}_\text{$b(a_{i-1})$-times}, \underbrace{a_i,a_i,\dots,a_i}_\text{$b(a_i)$-times},\dots$$
Assume to the contrary that some applicant benefits from misrepresenting her preferences. Let $\applicant_i$ be the first such applicant in $\copiesorder$ who reports $\preflist'(\applicant_i)$ instead of $\preflist(\applicant_i)$ in order to benefit and $\preflistset'=(\preflist'(\applicant_i),\preflistset(-\applicant_i))$. Let $\matching$ denote the matching returned by GSDT using ordering $\copiesorder$ on instance $I =(A,C,\preflistset,b,q)$ (i.e. the instance in which applicant $a_i$ reports truthfully) and $\xi$ the matching returned by GSDT using $\copiesorder$ but on instance $I' =(A,C,\preflistset',b,q)$. Let $s=(\Sigma_{\ell < i} b(a_\ell)) + 1$, i.e., $s$ is the first stage in which our mechanism considers  applicant $\applicant_i$. Let $j$ be the first stage of GSDT such that $\applicant_i$ prefers $\xi^j$ to $\matching^j$, where $s \leq j < s+b(\applicant_i)$. 

Given that applicants $\applicant_1,\dots,\applicant_{i-1}$ report the same in $I$ as in $I'$ and all their occurrences in $\Sigma$ are before stage $j$, Lemma~\ref{lem1} yields  $\matching^j(\applicant_\ell)\simeq_{\applicant_\ell}\xi^j(\applicant_\ell)$ for $\ell=1,2,\ldots,i-1$. Also $\matching^j(\applicant_\ell)=\xi^j(\applicant_\ell)=\emptyset$ for $\ell = i+1, i+2, \ldots, n_1$, since no such applicant has been considered before stage $j$. But then, all applicants apart from $\applicant_i$ are indifferent between $\matching^j$ and $\xi^{j}$, therefore $\applicant_i$ preferring $\xi^{j}$ to $\matching^j$ implies that $\matching^j$ is not a POM in $I^j$, a contradiction to Theorem~\ref{thm:alg1-pom}. 
 
\end{proof}

The next result then follows directly from Theorem~\ref{thm:alg1-truth}.
\begin{corollary}\label{cor:truthful-quota-one}
GSDT is truthful if all applicants have quota equal to one.
\end{corollary}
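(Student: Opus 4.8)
The plan is to deduce this directly from Theorem~\ref{thm:alg1-truth}, which guarantees truthfulness of GSDT for any priority ordering $\copiesorder$ in which each applicant's occurrences are consecutive. The key observation is that the hypothesis ``all applicants have quota equal to one'' forces every applicant to appear exactly once in the multisequence $\copiesorder$, which makes the consecutivity condition of Theorem~\ref{thm:alg1-truth} trivially satisfied.

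First I would recall that, by construction, the multisequence $\copiesorder=(a^1,a^2,\dots,a^B)$ contains each applicant $\applicant$ exactly $b(\applicant)$ times, where $B=\sum_{\applicant\in\applicantset} b(\applicant)$. Under the assumption that $b(\applicant)=1$ for all $\applicant\in\applicantset$, we have $B=\numapplicants$ and each applicant appears precisely once in $\copiesorder$. Consequently, for every applicant $\applicant$, the set of positions at which $\applicant$ occurs in $\copiesorder$ is a singleton, so its (single) occurrence is vacuously consecutive.

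I would then invoke Theorem~\ref{thm:alg1-truth}: since the consecutivity hypothesis holds for the given $\copiesorder$, GSDT is truthful given $\copiesorder$. As the above argument applies to \emph{any} ordering $\copiesorder$ arising when all quotas equal one, GSDT is truthful in this setting irrespective of the chosen priority ordering, completing the proof.

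There is essentially no technical obstacle here, as the corollary is an immediate specialization; the only point worth stating explicitly is the translation from ``quota one'' to ``each applicant appears once in $\copiesorder$,'' which is precisely what renders the consecutivity condition automatic. (One may additionally remark that this recovers the classical truthfulness of SDM and SDMT in the one-to-one House Allocation setting, where quotas are one by definition.)
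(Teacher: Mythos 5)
Your proof is correct and follows exactly the paper's route: the paper likewise derives this corollary directly from Theorem~\ref{thm:alg1-truth}, and your explicit observation that quota one forces each applicant to appear exactly once in $\copiesorder$ (so consecutivity is vacuous for every ordering) is precisely the intended specialization.
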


There are 
priority orderings for which an applicant may benefit from misreporting her preferences, even if preferences are strict. This phenomenon has also been observed in a slightly different context~\cite{BC-12}. Let us also provide an example.

\begin{example}\label{ex:2(2,1)-2(1,1)}
Consider a setting with applicants $\applicant_1$ and $\applicant_2$ and courses $\course_1$ and $\course_2$, for which $b(\applicant_1)=2$, $b(\applicant_2)=1$, $\quota(\course_1)=1$, and $\quota(\course_2)=1$. 
Let $\inputsetting$ be an instance in which 
$\course_2 \pref_{\applicant_1}\course_1$ and $\applicant_2$ finds only $\course_1$ acceptable. This setting admits two POMs, namely $\matching_1=\{(\applicant_1,\course_2), (\applicant_2,\course_1)\}$ and $\matching_2=\{(\applicant_1,\course_1), (\applicant_1,\course_2)\}$. 
%
GSDT returns $\matching_1$ for $\copiesorder=(\applicant_1, \applicant_2, \applicant_1)$. If $\applicant_1$ misreports by stating that she prefers $\course_1$ to $\course_2$, GSDT 
returns $\matching_2$ instead of $\matching_1$. Since 
$\matching_2 \pref_{\applicant_1} \matching_1$,
GSDT is not truthful given $\copiesorder$.
\end{example}

The above observation seems to be a deficiency of GSDT. We conclude by showing that no mechanism capable of producing all POMs is immune to this shortcoming. 

\begin{figure}[t!]
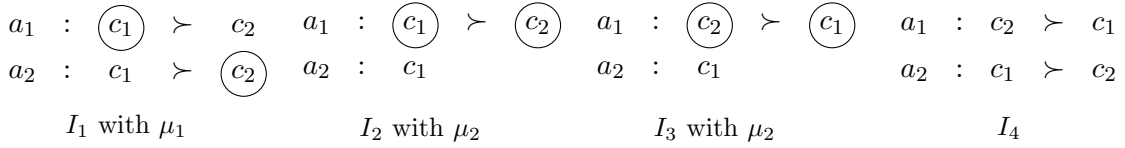
%
\centering
\begin{tabular}{cccc}
	\begin{minipage}{.23\textwidth}
	\setlength{\belowdisplayskip}{0.2cm}
	\[
	\begin{array}{llccc}
	\applicant_1 & : & \textnormal{\circled{$\course_1$}} & \pref & \course_2\\
	\applicant_2 & : & \course_1 & \pref & \textnormal{\circled{$\course_2$}}
	\end{array}\]
	\centering{\small{$\inputsetting_1$ with $\matching_1$}}
	\end{minipage}
&
	\begin{minipage}{.23\textwidth}
	\setlength{\belowdisplayskip}{0.2cm}
	\[
	\begin{array}{llccc}
	\applicant_1 & : & \textnormal{\circled{$\course_1$}} & \pref & \textnormal{\circled{$\course_2$}}\\
	\applicant_2 & : & \course_1 & &\\[0.2cm]
	\end{array}\]
	\centering{\small{$\inputsetting_2$ with $\matching_2$}}
	\end{minipage}
&
	\begin{minipage}{.23\textwidth}
	\setlength{\belowdisplayskip}{0.2cm}
	\[
	\begin{array}{llccc}
	\applicant_1 & : & \textnormal{\circled{$\course_2$}} & \pref & \textnormal{\circled{$\course_1$}}\\
	\applicant_2 & : & \course_1 & &\\[0.2cm]
	\end{array}\]
	\centering{\small{$\inputsetting_3$ with $\matching_2$}}
	\end{minipage}
	&
	\begin{minipage}{.23\textwidth}
	\setlength{\belowdisplayskip}{0.2cm}
	\[
	\begin{array}{llccc}
	\applicant_1 & : & \course_2 & \pref & \course_1\\[0.15cm]
	\applicant_2 & : & \course_1 & \pref & \course_2 \\[0.15cm]
	\end{array}\]
	\centering{\small{$\inputsetting_4$}}
	\end{minipage}
\end{tabular}
\caption{Four instances of CA used in the proof of Theorem~\ref{no-truthful}. In all four instances $b(\applicant_1)=2$, $b(\applicant_2)=1$, $\quota(\course_1)=\quota(\course_2)=1$. For each of instances $\inputsetting_1$ to $\inputsetting_3$, a matching is indicated using circles in applicants' preference lists.}%
\label{fig:2x2}%
\end{figure}

\begin{theorem}\label{no-truthful}
There is no universally truthful randomized mechanism that produces all POMs in CA, even if applicants' preferences are strict and all courses have quota equal to one.
\end{theorem}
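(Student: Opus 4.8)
The plan is to assume, for contradiction, that a universally truthful randomized mechanism $\mechanism$ produces all POMs. Being universally truthful, $\mechanism$ is a probability distribution over deterministic truthful mechanisms, and since it only outputs POMs, each such component outputs only POMs. I work with the four instances of Figure~\ref{fig:2x2}, all sharing $b(\applicant_1)=2$, $b(\applicant_2)=1$, $\quota(\course_1)=\quota(\course_2)=1$. First I would record the relevant POMs: $\inputsetting_1$ admits exactly $\matching_1=\{(\applicant_1,\course_1),(\applicant_2,\course_2)\}$, $\matching_2=\{(\applicant_1,\course_1),(\applicant_1,\course_2)\}$ and $\matching_3=\{(\applicant_1,\course_2),(\applicant_2,\course_1)\}$, whereas $\inputsetting_2,\inputsetting_3,\inputsetting_4$ each admit only $\matching_2$ and $\matching_3$ (in $\inputsetting_2,\inputsetting_3$ course $\course_2$ is unacceptable to $\applicant_2$, so $\matching_1$ is infeasible; in $\inputsetting_4$ it is Pareto dominated by $\matching_3$). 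Since $\matching_1$ is a POM of $\inputsetting_1$, some deterministic truthful component $\psi$ must satisfy $\psi(\inputsetting_1)=\matching_1$, and the whole contradiction will be extracted from $\psi$ alone.

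The core is to propagate $\psi(\inputsetting_1)=\matching_1$ across the grid using one truthfulness inequality per edge. Note that $\inputsetting_1$ and $\inputsetting_2$ differ only in $\applicant_2$'s report (truncating $\course_2$), $\inputsetting_2$ and $\inputsetting_3$ only in $\applicant_1$'s report (swapping $\course_1,\course_2$), and $\inputsetting_3$ and $\inputsetting_4$ only in $\applicant_2$'s report. In $\matching_1$ applicant $\applicant_2$ receives her second choice $\course_2$; since $\applicant_2$'s true list is $\course_1\strictlypref\course_2$, truthfulness forbids $\psi(\inputsetting_2)$ from handing $\applicant_2$ her top course $\course_1$ (else she would truncate to reach $\inputsetting_2$ and gain). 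Hence $\applicant_2$ is unassigned in $\inputsetting_2$, and as $\psi$ returns a POM this forces $\psi(\inputsetting_2)=\matching_2$, i.e.\ $\applicant_1$ takes both courses.

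Next, because $\matching_2$ awards $\applicant_1$ her unique most-preferred bundle $\{\course_1,\course_2\}$ under either ordering, truthfulness of $\applicant_1$ across $\inputsetting_2\leftrightarrow\inputsetting_3$ forces $\psi(\inputsetting_3)$ to also award $\applicant_1$ both courses, so $\psi(\inputsetting_3)=\matching_2$. I would then examine $\psi(\inputsetting_4)\in\{\matching_2,\matching_3\}$ and show each option breaks truthfulness. If $\psi(\inputsetting_4)=\matching_2$, then $\applicant_1$ (true list $\course_1\strictlypref\course_2$, honest instance $\inputsetting_1$) obtains only $\{\course_1\}$ honestly but the strictly better $\{\course_1,\course_2\}$ by declaring $\course_2\strictlypref\course_1$ to reach $\inputsetting_4$. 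If instead $\psi(\inputsetting_4)=\matching_3$, then an applicant whose true list is exactly the one in $\inputsetting_3$ (only $\course_1$ acceptable) gets nothing in $\inputsetting_3$ but her genuine top course $\course_1$ by declaring $\course_2$ acceptable and thereby reaching $\inputsetting_4$. Both contradict truthfulness of $\psi$, completing the argument.

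The step I expect to be the main obstacle is this final case split, specifically justifying the $\applicant_2$ manipulation in the second case: it is an \emph{over-report} (declaring a truly unacceptable course acceptable) rather than a truncation, so I must invoke the paper's definition of truthfulness, which quantifies over \emph{every} declared preference list $\preflist'(\applicant)$, and verify that the deviating applicant ends with $\course_1$ (not the unacceptable $\course_2$), yielding a genuine strict gain. A secondary point needing care is that each deterministic component outputs only POMs, which is precisely what pins down $\psi(\inputsetting_2)=\matching_2$ from the bare fact that $\applicant_2$ is unassigned there.
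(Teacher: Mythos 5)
Your proposal is correct and follows essentially the same route as the paper's proof: you fix a deterministic truthful realization $\psi$ with $\psi(\inputsetting_1)=\matching_1$, use truthfulness to force $\psi(\inputsetting_2)=\matching_2$ (via $\applicant_2$'s potential truncation from $\inputsetting_1$) and $\psi(\inputsetting_3)=\matching_2$ (via $\applicant_1$'s potential swap to $\inputsetting_2$), and then eliminate both POMs of $\inputsetting_4$ using exactly the two manipulations the paper uses ($\applicant_1$ deviating from $\inputsetting_1$, and $\applicant_2$ over-reporting from $\inputsetting_3$). The differences are purely presentational, e.g.\ your explicit note that individual rationality with respect to declared preferences pins down $\applicant_2$'s possible assignments in $\inputsetting_2$.
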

\begin{proof}
The instance $\inputsetting_1$ in Figure~\ref{fig:2x2} admits three POMs, namely $\matching_1=\{(\applicant_1,\course_1), \\ (\applicant_2,\course_2)\}$, $\matching_2=\{(\applicant_1,\course_1), (\applicant_1,\course_2)\}$ and $\matching_3=\{(\applicant_1,\course_2), (\applicant_2,\course_1)\}$. Assume a randomized mechanism $\mechanism$ that produces all these matchings. 
Therefore, there must be a deterministic realization of it, denoted as $\mechanism^D$, that returns $\matching_1$ given $\inputsetting_1$. Let us examine the outcome of $\mechanism^D$ under the slightly different applicants' preferences shown in Figure~\ref{fig:2x2}, bearing in mind that $\mechanism^D$ is truthful.
\begin{itemize}
	\item Under $\inputsetting_2$, $\mechanism^D$ must return $\matching_2$. The only other POM under $\inputsetting_2$ is $\matching_3$, but if $\mechanism^D$ returns $\matching_3$, then $\applicant_2$ under $\inputsetting_1$ has an incentive to lie and declare only $\course_1$ acceptable (as in $\inputsetting_2$).
	\item Under $\inputsetting_3$, $\mechanism^D$ must return $\matching_2$. The only other POM under $\inputsetting_3$ is $\matching_3$, but if $\mechanism^D$ returns $\matching_3$, then $\applicant_1$ under $\inputsetting_3$ has an incentive to lie and declare that she prefers $\course_1$ to $\course_2$ (as in $\inputsetting_2$).
 \end{itemize}
$\inputsetting_4$ admits two POMs, namely $\matching_2$ and $\matching_3$. If $\mechanism^D$ returns $\matching_2$, then $\applicant_1$ under $\inputsetting_1$ has an incentive to lie and declare that she prefers $\course_2$ to $\course_1$ (as in $\inputsetting_4$). If $\mechanism^D$ returns $\matching_3$, then $\applicant_2$ under $\inputsetting_3$ has an incentive to lie and declare $\course_2$ acceptable---in addition to $\course_1$---and less preferred than $\course_1$ (as in $\inputsetting_4$).
Thus overall $\mechanism^D$ cannot return a POM under $\inputsetting_4$ while maintaining truthfulness.
 \end{proof}

\section{Future work}\label{futWor}
Our work raises several questions. A particularly important problem is to investigate the expected size of the matching produced by the randomized version of GSDT. It is also interesting to characterize priority orderings that imply truthfulness in GSDT. Consequently, it will be interesting to compute the expected size of the matching produced by 
a randomized GSDT in which the randomization is taken over the priority orderings that induce truthfulness. 

\phantom{\cite{CEFMMMOR-sagt15}}

\bibliographystyle{plain}
\bibliography{bibfile}

%
%

\end{document}